\renewcommand{\em}{\it}
\newcommand{\Tr}{\mathrm{Tr}\,}
\newcommand{\aux}{\mathrm{aux}}
\newtheorem{proposition}{Theorem}
\newtheorem*{proposition*}{Theorem}
\newtheorem{corollary}{Corollary}
\newtheorem{lemma}{Lemma}
\def\A{\mathcal{A}}
\def\B{\mathcal{B}}
\def\D{\mathcal{D}}
\def\H{\mathcal{H}}
\def\I{\mathcal{I}}
\def\P{\mathcal{P}}
\def\Q{\mathcal{Q}}
\def\U{\mathcal{U}}
\def\Dset{\mathbb{D}}
\def\Rset{\mathbb{R}}
\def\Sset{\mathbb{S}}
\def\lp{\mathrm{LP}}
\def\povm{\mathrm{POVM}}
\def\c{\mathrm{c}}
\def\;{\, ; \,}
\begin{document}

\title{Geometric  approach  to extend  Landau-Pollak  uncertainty relations  for
  positive operator-valued measures}
\author{G.M.\  Bosyk$^{1,2}$,  S.\  Zozor$^{2,1}$,  M.\  Portesi$^{1,2}$,  T.M.\
  Os\'an$^3$, and P.W.\ Lamberti$^3$\vspace{2mm}}
\affiliation{$^1$  Instituto   de  F\'isica   La  Plata  (IFLP),   CONICET,  and
  Departamento de  F\'isica, Facultad de Ciencias  Exactas, Universidad Nacional
  de La Plata, 1900 La Plata, Argentina, \vspace{2mm}}
\affiliation{$^2$ Laboratoire Grenoblois  d'Image, Parole, Signal et Automatique
  (GIPSA-Lab, CNRS),  11 rue des Math\'ematiques, 38402  Saint Martin d'H\`eres,
  France, \vspace{2mm}}

\affiliation{$^3$  Facultad de  Matem\'atica, Astronom\'ia  y  F\'isica (FaMAF),
  Universidad Nacional  de C\'ordoba, and  CONICET, Avenida Medina  Allende S/N,
  Ciudad Universitaria, X5000HUA, C\'ordoba, Argentina\vspace{2mm}}


\begin{abstract}

  We  provide a twofold  extension of  Landau--Pollak uncertainty  relations for
  mixed quantum states and for positive operator-valued measures, by recourse to
  geometric considerations. The generalization  is based on metrics between pure
  states,  having the form  of a  function of  the square  of the  inner product
  between the states.  The triangle inequality satisfied by such metrics plays a
  crucial role in our derivation.  The usual Landau--Pollak inequality is thus a
  particular case (derived  from Wootters metric) of the  family of inequalities
  obtained,  and, moreover, we  show that  it is  the most  restrictive relation
  within the family.

\pacs{03.65.Ca, 03.65.Ta, 02.50.-r, 05.90.+m}

\keywords{Landau--Pollak  inequality,   uncertainty  relations,  metrics,  mixed
  states, POVM}
\end{abstract}

\date{\today}

\maketitle


\section{Introduction}
\label{sec:introduction}

The uncertainty  principle is  one of the  most important principles  in quantum
mechanics.   Originally  stated  by  Heisenberg~\cite{Hei27}, it  establishes  a
limitation  on  the  simultaneous  predictability of  incompatible  observables.
Uncertainty relations constitute the quantitative expressions of this principle.
The     first    formulations,    due     to    Heisenberg,     Robertson    and
Schr\"odinger~\cite{Hei27,  VURs}, are  the  most popular  ones.  However,  they
exhibit a state-dependent lower bound for the product of the variances of a pair
of noncommuting observables.  Thus, a drawback of this formulation is that it is
not universal;  moreover, the universal bound  (the minimum over  the states) is
trivial, i.e., it equals zero.   Several alternatives have been studied, such as
those using the  sum of the variances instead  of the product~\cite{SumVURs}, or
those using information-theoretic measures as quantifiers of lack of information
(see  Refs.\ \cite{MaaUff88,  improveMU,  EURs}, and  \cite{reviews} for  recent
reviews).  Recently,  some of us~\cite{ZozBos14}  extended entropic formulations
of  the  uncertainty  principle to  the  case  of  a  pair of  observables  with
nondegenerate discrete $N$-dimensional  spectra, using generalized informational
entropies.    The  proposed   formulation  makes   use  of   the  Landau--Pollak
inequality~(LPI)    which     has    been    introduced     in    time-frequency
analysis~\cite{LanPol61},  and  later  on  adapted  to  the  quantum  mechanics'
language~\cite{MaaUff88}.   In  this  last  case,  the inequality  is  itself  a
(geometric) formulation of the uncertainty principle and applies for pure states
and nondegenerate observables. The goal of this paper is precisely to extend the
scope of Landau--Pollak-type inequalities.

Let us  consider a pure state  $| \Psi \rangle$ belonging  to an $N$-dimensional
Hilbert space $\H$, and two observables with discrete, nondegenerate spectra and
corresponding  eigenbases  $\{  |a_i\rangle  \}_{i  = 1,  \ldots,  N}$  and  $\{
|b_j\rangle \}_{j = 1, \ldots, N}$, described by the operators $A = \sum_{i=1}^N
a_i\ |a_i\rangle \langle  a_i|$ and $B = \sum_{j=1}^N  b_j\ |b_j \rangle \langle
b_j|$. The LPI~\cite{LanPol61, MaaUff88} reads:
\begin{equation}
\label{eq:LPI}
\arccos \! \left( \max_i |\langle a_i | \Psi \rangle| \right) + \arccos \!
\left( \max_j |\langle b_j | \Psi \rangle| \right) \geq \arccos c
\end{equation}
where $c  = \max_{ij}  |\langle a_i |  b_j \rangle|  $ is the  so-called overlap
between   the  eigenbases  of   the  observables.    The  overlap   ranges  from
$\frac{1}{\sqrt N}$  (complementary observables) to 1 (the  observables share at
least an  eigenstate).  The  quantity $|\langle a_i  | \Psi  \rangle|^2$ (resp.\
$|\langle  b_j  | \Psi  \rangle|^2$)  is  interpreted  as the  probability  that
observable $A$ (resp.\  $B$), for a system in  preparation $|\Psi\rangle$, takes
the  eigenvalue $a_i$ (resp.\  $b_j$).  The  LPI~\eqref{eq:LPI} is  a meaningful
expression of the strong uncertainty principle for nondegenerate observables and
pure states.  On  the one hand, unlike standard approaches,  the right hand side
of~\eqref{eq:LPI}  is  a  state-independent   lower  bound  on  the  probability
distributions associated with  observables $A$ and $B$. The  bound is nontrivial
whenever $c < 1$,  i.e., when the observables $A$ and $B$  do not share a common
eigenstate. On the other hand, when $\max_i |\langle a_i | \Psi \rangle|^2 = 1$,
i.e.,  when  the  probability  distribution  associated  to  observable  $A$  is
concentrated, LPI implies that $\max_i |\langle b_i | \Psi \rangle|^2 \leq c^2 <
1$ that means  that the probability distribution associated  with observable $B$
cannot   be  concentrated;  Furthermore,   in  the   complementary  case   $c  =
\frac{1}{\sqrt{N}}$, the probability distribution associated with observable $B$
must be uniform.  Besides, LPI has been used to improve Maassen--Uffink entropic
uncertainty  relation~\cite{improveMU}, and  a  weak version  has  been used  to
obtain entanglement criteria~\cite{MiyIma07, entLP}.

The  extraction of  information from  a  quantum system  inevitably requires  to
perform a  measurement. The simplest one  is a projection  valued measure (PVM),
also known as  von Neumann measurement. However, a  description of a measurement
by PVM is  in general insufficient because most of the  observations that can be
performed  are not  of this  type. Besides,  for many  applications one  is only
interested in  the probability distributions  associated to the  observables but
not  in  the  post-measurement   state.   Fortunately,  there  exists  a  simple
mathematical tool  known as  positive operator-valued measures  (POVM) formalism
which provides a  generalization of standard projective measurements  and also a
description       of       any       possible      measurement       to       be
performed~\cite{NieChu10,BenZyc06}.

Until  recently,  LPI  had  only  been demonstrated  for  pure  quantum  states.
However, in a recent contribution~\cite{BosOsa14}, inequality~\eqref{eq:LPI} was
generalized to deal with mixed states for the case of nondegenerate observables,
and  was  indeed  extended (based  on  geometric  concepts)  using a  family  of
uncertainty  measures other  than the  arccosine (which  is related  to Wootters
metric).  Here  we go a step  further, extending the  LPI~\eqref{eq:LPI} and its
generalization given  in~Ref.~\cite{BosOsa14}, in order to  deal with degenerate
observables described by POVM sets.  In Sec.~\ref{sec:GLPI}, we first resume the
framework, namely  that of two  observables described by  two POVM sets,  and we
describe  measures  of uncertainty  based  on a  class  of  metric between  pure
states. Then we formulate our main  results, namely: (i) extension of the LPI in
the  context of POVM  using a  class of  generalized uncertainty  measures, (ii)
determination of the  most restrictive measure within the  class considered, and
(iii)  analysis  of  the  uncertainty   intrinsic  to  a  given  POVM  set.   In
Sec.~\ref{sec:Illustrations} we provide some numerical illustrations. We analyze
the consequences of the extended LPI in the context of POVM, and we also discuss
the optimality or  not of the uncertainty relations  obtained.  Some conclusions
are drawn in Sec.~\ref{sec:Conclusions}.  The  proof of the extended LPI is made
in  several steps,  developed  in detail  in  App.~\ref{app:Proofs}; whereas  in
App.~\ref{app:simulations}   the  algorithms  used   for  the   simulations  are
presented.


\section{Generalized Landau--Pollak inequalities: main results}
\label{sec:GLPI}

Let us consider two observables $A$ and  $B$ described by the POVM sets $\A = \{
A_i \}_{i=  1,\ldots,N_A}$ and $\B = \{  B_j \}_{j=1,\ldots,N_B}$, respectively,
i.e.,  $\A$  and   $\B$  are  sets  of  Hermitian   (or  self-adjoint)  positive
semi-definite operators  acting on an  $N$-dimensional Hilbert space  $\H$, that
satisfy   the   completeness   relation   or   resolution   of   the   identity:
$\sum_{i=1}^{N_A} A_i  = I  = \sum_{j=1}^{N_B} B_j$,  where $I$ is  the identity
operator on $\H$, and $N_A$ and  $N_B$ are not necessarily equal to one another,
or equal to $N$.  In some sense, the $A_i$'s (resp.\ $B_j$'s) allow to represent
the possible  outcomes of observable $A$  (resp.\ $B$). Also, let  us consider a
system whose  state is described  by a density  operator $\rho$ acting  on $\H$,
where $\rho$ is  Hermitian, positive semi-definite, and normalized  ($\Tr \rho =
1$). We denote by $\D$ the set of density operators.  The quantity
\begin{equation*}
p_i(A;\rho) = \Tr(A_i \rho)
\end{equation*}
represents the probability of measuring the $i$th outcome of $A$ when the system
is in  the state  $\rho$ ~\cite{BenZyc06}.  In  the context of  observables with
nondegenerate spectra and  pure states, the operators take the  form of rank one
projectors,      $A_i       =      |a_i\rangle      \langle       a_i|$      and
$\rho=|\Psi\rangle\langle\Psi|$.

\

Let us now turn to the consideration  of a measure of uncertainty that allows us
for the generalization of the LPI.  We start with continuous functions $f: [0 \;
1] \mapsto  \Rset_+$, that are strictly  decreasing and satisfy $f(1)  = 0$, and
such  that  for two  (normalized)  pure  states $|  \Psi  \rangle$  and $|  \Phi
\rangle$, \ $f \left( \left| \langle \Psi | \Phi \rangle \right|^2 \right) = d_f
\left( | \Psi \rangle  \, , \, | \Phi \rangle \right)$  defines a metric between
them.  This kind  of metrics is interesting as they depend  on the inner product
between  two  quantum  states  and,  hence, they  are  invariant  under  unitary
transformations. Some well-known cases are:
\begin{itemize}
\item  $f(x)  = \arccos\sqrt{x}$,  leading  to  the  Wootters metric,  or  Bures
  angle~\cite{WoottersMetric},
\item   $f(x)   =   \sqrt{2    \,   (1-\sqrt{x})}$,   leading   to   the   Bures
  metric~\cite{Bur69, BenZyc06},
\item $f(x) = \sqrt{1-x}$, related to the root-infidelity metric~\cite{GilLan05},
  or Hilbert--Schmidt or trace distance~\cite{BenZyc06}.
\end{itemize}
We notice  that these  metrics extend  to (or, indeed,  were defined  for) mixed
states,  with function  $f$  being applied  to  the fidelity  between two  mixed
states~\cite{BenZyc06,fidelity}.  From such metrics $d_f$, the quantity
\begin{equation}
\U_f \left( \A ; \rho \right) \, = \, f\left( P_{\A;\rho} \right)
\label{eq:Uf}
\end{equation}
with
\begin{equation}
P_{\A;\rho} = \max_i \, \Tr(A_i \rho) = \max_i \, p_i(A;\rho)
\end{equation}
defines an uncertainty measure corresponding to the measurement of a set $\A$ of
operators that describe  observable $A$, for a system in a  state $\rho$, in the
sense that~\cite{BosOsa14}
\begin{itemize}
\item  $\U_f\left(  \A   ;  \rho  \right)  \ge  0$   for  all  $P_{\A;\rho}  \in
  \left[\frac{1}{N_A},1\right]$, and
\item  $\U_f\left( \A  ; \rho  \right)$ is  decreasing in  terms of  the maximal
  probability $P_{\A;\rho}$, with
  \begin{itemize}
  \item[$\diamond$] $\U_f\left( \A ; \rho \right)$ is maximum iff $P_{\A;\rho} =
    \frac{1}{N_A}$,  that   is  equivalent  to   the  equiprobability  situation
    $p_i(A;\rho) = \frac{1}{N_A}$ for all $i$,
  \item[$\diamond$] $\U_f\left(  \A ; \rho \right)$ vanishes  iff $P_{\A;\rho} =
    1$,  that   is  equivalent  to   the  certainty  situation   $p_i(A;\rho)  =
    \delta_{ik}$ for a given $k$.
  \end{itemize}
\end{itemize}

\

Our  main result in  the present  contribution is  a two-fold  generalization of
Landau--Pollak-type uncertainty relations, comprising  the cases of mixed states
and of  POVM descriptions.  We establish  the following theorem,  whose proof is
postponed until Apps.~\ref{subsec:Projectors_pure}--\ref{subsec:POVM_mixed}, and
give a discussion below.
\begin{proposition}
  Let $\A = \{ A_i \}_{i = 1, \ldots,  N_A}$ and $\B = \{ B_j \}_{j = 1, \ldots,
    N_B}$  be   two  positive  operator  valued   measures  describing  discrete
  observables  $A$  and $B$,  respectively,  and  acting  on an  $N$-dimensional
  Hilbert  space $\H$.  Then  for an  arbitrary density  operator $\rho  \in \D$
  acting on $\H$, the following relation holds:
\begin{equation}
\U_f\left( \A ; \rho \right) + \U_f\left( \B ; \rho \right) \ge f \left( c^{\
2}_{\A,\B} \right)
\label{eq:LPI_mixed_POVM}
\end{equation}
where
\begin{equation}
c_{\A,\B} = \max_{ij} \left\| \sqrt{A_i} \, \sqrt{B_j} \right\|\ = \ \max_{ij}
\left\| \sqrt{B_j} \, \sqrt{A_i} \right\|
\label{eq:overlap}
\end{equation}
is the generalized overlap between the two POVM sets.
\label{prop:LPI_mixed_POVM}
\end{proposition}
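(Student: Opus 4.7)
The plan is a chain of three reductions that successively peel off the two generalizations beyond the original LPI (rank-one PVMs on pure states): (i) projective measurements on pure states, possibly with higher-rank projectors; (ii) general POVMs on pure states via a joint Naimark dilation; (iii) POVMs on mixed states via purification. The triangle inequality of the metric $d_f$ drives step (i); (ii) and (iii) are structural arguments designed to preserve both marginal probabilities and the generalized overlap~\eqref{eq:overlap}.

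\textit{Step (i).} Take $\A=\{\Pi_i\}$, $\B=\{\Gamma_j\}$ with $\Pi_i,\Gamma_j$ orthogonal projectors, and $\rho=|\Psi\rangle\langle\Psi|$. Let $i^{\ast}, j^{\ast}$ attain $P_{\A;\rho}$ and $P_{\B;\rho}$ (the case of vanishing maxima is trivial). Introduce the unit vectors $|\alpha\rangle=\Pi_{i^{\ast}}|\Psi\rangle/\sqrt{p_{i^{\ast}}}\in\mathrm{range}(\Pi_{i^{\ast}})$ and $|\beta\rangle=\Gamma_{j^{\ast}}|\Psi\rangle/\sqrt{q_{j^{\ast}}}\in\mathrm{range}(\Gamma_{j^{\ast}})$, so that $|\langle\Psi|\alpha\rangle|^2=p_{i^{\ast}}$ and $|\langle\Psi|\beta\rangle|^2=q_{j^{\ast}}$. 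The triangle inequality for $d_f$ applied to $|\alpha\rangle,|\Psi\rangle,|\beta\rangle$ yields
\begin{equation*}
f(|\langle\alpha|\beta\rangle|^2)\;\le\;\U_f(\A;\rho)+\U_f(\B;\rho).
\end{equation*}
Because $|\alpha\rangle$ and $|\beta\rangle$ are unit vectors in the ranges of $\Pi_{i^{\ast}}$ and $\Gamma_{j^{\ast}}$, one has $|\langle\alpha|\beta\rangle|=|\langle\alpha|\Pi_{i^{\ast}}\Gamma_{j^{\ast}}|\beta\rangle|\le\|\Pi_{i^{\ast}}\Gamma_{j^{\ast}}\|\le c_{\A,\B}$, and the decreasing monotonicity of $f$ closes the step.

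\textit{Steps (ii) and (iii).} For POVMs on a pure state I would invoke a joint Naimark dilation: a Hilbert space $\H'$, an isometry $V:\H\to\H'$, and PVMs $\{\Pi_i\}$, $\{\Gamma_j\}$ on $\H'$ such that $A_i=V^{\dagger}\Pi_i V$, $B_j=V^{\dagger}\Gamma_j V$, and $\|\Pi_i\Gamma_j\|=\|\sqrt{A_i}\sqrt{B_j}\|$ for all $i,j$. Setting $|\tilde\Psi\rangle=V|\Psi\rangle\in\H'$ (unit norm), the marginals carry over, $\langle\tilde\Psi|\Pi_i|\tilde\Psi\rangle=p_i(A;\rho)$, and similarly for $B$; hence step (i) applied in $\H'$ delivers~\eqref{eq:LPI_mixed_POVM}. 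For a general density operator $\rho$, purify it as $|\Psi_{\rho}\rangle\in\H\otimes\H_{\aux}$ with $\Tr_{\aux}|\Psi_{\rho}\rangle\langle\Psi_{\rho}|=\rho$, and lift the POVMs trivially, $\tilde A_i=A_i\otimes I_{\aux}$, $\tilde B_j=B_j\otimes I_{\aux}$. The probabilities are preserved, and since $\sqrt{A_i\otimes I_{\aux}}=\sqrt{A_i}\otimes I_{\aux}$ and $\|M\otimes I_{\aux}\|=\|M\|$, so is the generalized overlap; step (ii) in the enlarged space produces the theorem.

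\textit{Main obstacle.} The technical core lies in step (ii): constructing a joint Naimark dilation that simultaneously realizes $\A$ and $\B$ as compressions of PVMs on a common space \emph{and} satisfies $\|\Pi_i\Gamma_j\|=\|\sqrt{A_i}\sqrt{B_j}\|$ (or, more economically, the one-sided inequality $c_{\Pi,\Gamma}\le c_{\A,\B}$ that suffices). The canonical single-POVM dilation $V|\psi\rangle=\sum_i\sqrt{A_i}|\psi\rangle\otimes|i\rangle$ handles only one POVM at a time and must be generalized. This algebraic matching of PVM and generalized overlaps is what motivates the operator-norm definition~\eqref{eq:overlap} and is the point where the proof genuinely departs from the original LPI; by contrast, the triangle-inequality reduction of step (i) and the purification trick of step (iii) are largely formal.
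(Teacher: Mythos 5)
Your architecture is the paper's: a triangle-inequality argument for projectors and pure states, a dilation to pass to general POVMs, and a purification to pass to mixed states. Your steps (i) and (iii) are correct and essentially identical to the paper's Lemma~\ref{prop:LPI_pure_POVM_proj} and its treatment of mixed states. But step (ii), which you yourself flag as the technical core, is not a proof: you \emph{postulate} a joint Naimark dilation realizing $A_i=V^{\dagger}\Pi_iV$ and $B_j=V^{\dagger}\Gamma_jV$ with $\|\Pi_i\Gamma_j\|=\|\sqrt{A_i}\sqrt{B_j}\|$, and never construct it. As written, the argument has a genuine gap exactly where the theorem departs from the known special cases.

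The gap is fillable, and it is instructive to see how the paper fills it, because two of your self-imposed requirements are stronger than necessary. First, your step (i) never uses that the $\Pi_i$ are mutually orthogonal or resolve the identity --- only that $\Pi_{i^{\ast}}$ and $\Gamma_{j^{\ast}}$ are individually projectors. So you do not need PVMs in the dilated space, only \emph{sets of projectors}; this relaxation is what makes an explicit construction available. The paper (following Miyadera--Imai) embeds $\H$ into $\H\oplus\H^{\aux}\oplus\H^{\aux}$ via $|\Psi\rangle\mapsto|\Psi\rangle\oplus 0\oplus 0$ and takes block projectors $P_i$ built from $A_i$ and $\sqrt{A_i(I-A_i)}$ acting on the first two summands (and $Q_j$ from $B_j$ on the first and third). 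One can check that $\mathrm{ran}\,P_i=\{\sqrt{A_i}\,|u\rangle\oplus\sqrt{I-A_i}\,|u\rangle\oplus 0\}$, from which the matching condition $\|P_iQ_j\|=\|\sqrt{A_i}\sqrt{B_j}\|$ you ask for does in fact hold. Second, even that identity is not needed: the paper tracks only the quantity entering the triangle inequality, $|\langle\Phi|P_iQ_j|\Phi\rangle|/(\|P_i|\Phi\rangle\|\,\|Q_j|\Phi\rangle\|)$, shows it equals $|\langle\Psi|A_iB_j|\Psi\rangle|/(\|\sqrt{A_i}|\Psi\rangle\|\,\|\sqrt{B_j}|\Psi\rangle\|)$, and bounds the latter by $c_{\A,\B}$ with two applications of Cauchy--Schwarz: $|\langle\Psi|\sqrt{A_i}\,\sqrt{A_i}\sqrt{B_j}\,\sqrt{B_j}|\Psi\rangle|\le\|\sqrt{A_i}|\Psi\rangle\|\,\|\sqrt{A_i}\sqrt{B_j}\|\,\|\sqrt{B_j}|\Psi\rangle\|$. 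Until you supply such a construction (or an equivalent one), your step (ii) remains an assumption rather than a lemma, and the theorem is not proved.
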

\noindent The overlap~\eqref{eq:overlap} is given  in terms of an operator norm.
For the  sake of completeness, we  recall here its definition:  for any operator
$O$ on $\H$, \ $\displaystyle \| O \| = \max_{|\varphi\rangle \in \H} \frac{\| O
  |\varphi\rangle  \|}{\| |\varphi\rangle  \|} =  \max_{|\varphi\rangle  \in \H}
\frac{\langle \varphi | O^\dag  O | \varphi \rangle^{\frac12}}{\langle \varphi |
  \varphi \rangle^{\frac12}}  = \max_{|\Psi\rangle \in \H: \|  |\Psi\rangle \| =
  1}  \|  O  |\Psi\rangle  \|$,  where  $O^\dag$  is  the  adjoint  operator  of
$O$~\cite{ReeSim80-Con90}.

Notice that, in the case of observables with nondegenerate spectra, letting $A_i
= |a_i\rangle  \langle a_i|$ and $B_j  = |b_j\rangle \langle b_j|$  then one has
$c_{\A,\B} = \max_{ij}  | \langle a_i | b_j \rangle |  = c$.  The generalization
of LPI  to mixed  states proved in  Ref.~\cite{BosOsa14} is then  recovered from
Theorem~\ref{prop:LPI_mixed_POVM}.       Moreover,       in      this      case,
inequality~\eqref{eq:LPI_mixed_POVM} is  sharp whatever  $f$, in the  sense that
there  exists at least  one state  that renders  equality.  Indeed,  denoting by
$(i',j')$  the  pair  of  indices  such  that  $c_{\A,\B}  =  \|  \sqrt{A_{i'}}\
\sqrt{B_{j'}} \|$, and choosing $|\Psi\rangle = |a_{i'}\rangle$ or $|\Psi\rangle
= |b_{j'}\rangle$, together  with the fact that $f(1) = 0$,  allows to prove the
assertion.

A way to look at  the family of inequalities~\eqref{eq:LPI_mixed_POVM} is in the
sense that they  establish a restriction to the values that  the pair of maximal
probabilities $\left( P_{\A;\rho} \, , \, P_{\B;\rho} \right)$ can take jointly,
within  the   rectangle  $\left[  \frac{1}{N_A}  \;  1   \right]  \times  \left[
  \frac{1}{N_B} \;  1 \right]$.   We point  out that the  fact discussed  in the
preceding paragraph for the context  of nondegenerate observables, does not mean
that the whole family of inequalities  renders the same permitted domain for the
pair;  that  is neither  the  case  in the  POVM  context.   And,  in fact,  the
restriction   imposed  by~\eqref{eq:LPI_mixed_POVM}   manifests  in   a  reduced
rectangle. Indeed, if $P_{\A;\rho} \le c_{\A,\B}^{\ 2}$ (resp.\ $P_{\B;\rho} \le
c_{\A,\B}^{\  2}$),   then  $f(P_{\A;\rho})  \ge   f(c_{\A,\B}^{\  2})$  [resp.\
$f(P_{\B;\rho})       \ge       f(c_{\A,\B}^{\       2})$~]       and       thus
inequality~\eqref{eq:LPI_mixed_POVM}  is  satisfied  whatever  $P_{\B;\rho}  \in
\left[  \frac{1}{N_B}   \;  1  \right]$  \big(resp.\   $P_{\A;\rho}  \in  \left[
  \frac{1}{N_A}  \;   1  \right]$\big).    But  if  (and   only  if)   the  pair
$(P_{\A;\rho},P_{\B;\rho})$ is within the square $(c_{\A,\B}^{\ 2} \; 1]^2$, the
inequality  becomes restrictive.   A  careful look  at~\eqref{eq:LPI_mixed_POVM}
suggests us to define the function
\begin{equation}
\begin{array}{ccl}
  h_c^f & : & [c^2 \; 1]  \to  [ c^2 \; 1 ] \\ [2mm]
  h_c^f (x) & = &  f^{-1} \left( f(c^2) - f(x) \right)
\end{array}
\label{eq:hcf}
\end{equation}
\noindent  Thus,  the  restriction due  to  inequality~\eqref{eq:LPI_mixed_POVM}
writes down as
\begin{equation}
P_{\B;\rho} \le h_{c_{\A,\B}}^f \big( P_{\A;\rho} \big) \qquad \mbox{for} \qquad
P_{\A;\rho} \in (c_{\A,\B}^{\ 2} \; 1]
\label{eq:DomainPaPb_mixed_POVM}
\end{equation}
(a similar relation is valid  exchanging the roles  of $\A$ and  $\B$). 

In  the case of  nondegenerate spectra,  the fact  that the  lower bound  to the
uncertainty   sum  can   be  reached   whatever   $f$,  is   evidenced  in   the
maximum-probabilities plane in the fact  that the points $(c^2,1)$ and $(1,c^2)$
coincide   for  all   curves   $y   =  h_c^f(x)$,   as   already  mentioned   in
Ref.~{\cite{BosOsa14},  and there  do  exist states  for  which $(P_{\A;\rho}  ,
  P_{\B;\rho}) =  (1,c^2)$ and  those for which  $(P_{\A;\rho} ,  P_{\B;\rho}) =
  (c^2,1)$.  The question that had remained open, was to know which function $f$
  of    the    family    considered    leads    to    the    most    restrictive
  inequality~\eqref{eq:DomainPaPb_mixed_POVM},   i.e.,   which   $f$   minimizes
  $h_c^f(P)$ when $c$ and $P \in (c^2  \; 1)$ are fixed.  The answer is given in
  the   following    theorem,   the   proof    of   which   is    presented   in
  App.~\ref{subsec:Wootters}:
\begin{proposition}
  Within   the    whole   family   of   uncertainty    inequalities   given   by
  Theorem~\ref{prop:LPI_mixed_POVM}, the  strongest restriction for  the pair of
  maximal probabilities $\left( P_{\A;\rho} , P_{\B;\rho} \right)$, rewritten as
  inequality~\eqref{eq:DomainPaPb_mixed_POVM}, and its counterpart changing $\A$
  with  $\B$, corresponds  to Wootters  case, namely  for the  function  $f(x) =
  \arccos\sqrt{x}$.
\label{prop:Wootters}
\end{proposition}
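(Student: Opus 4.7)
The plan is to exploit the fact that the function $h_c^f$ encodes exactly the equality case of the triangle inequality for the metric $d_f$ evaluated on suitable pure-state triples. Indeed, for any pure states $|\Psi\rangle,|a\rangle,|b\rangle\in\H$ with $|\langle a|b\rangle|^2 = c^2$ and $|\langle a|\Psi\rangle|^2 = P$, the triangle inequality $d_f(|a\rangle,|b\rangle) \le d_f(|a\rangle,|\Psi\rangle) + d_f(|\Psi\rangle,|b\rangle)$ reads $f(c^2)\le f(P)+f(|\langle b|\Psi\rangle|^2)$, and rearranging (using that $f$, hence $f^{-1}$, is strictly decreasing) gives $|\langle b|\Psi\rangle|^2 \le h_c^f(P)$. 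To prove that Wootters minimizes $h_c^f(P)$ over the family, it therefore suffices to exhibit, for each admissible pair $(c,P)$, a concrete pure-state triple that saturates this triangle inequality in the Wootters case $f_W(x)=\arccos\sqrt{x}$.

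Such saturating triples can be constructed inside any two-dimensional subspace of $\H$, identified with a Bloch ball, because the Wootters metric coincides with the Fubini--Study geodesic distance on $\mathbb{CP}^1$, so three points on a common geodesic saturate its triangle inequality. Concretely, pick $|a\rangle$ and $|b\rangle$ on a great circle at Bloch angular separation such that $|\langle a|b\rangle|=c$, and place $|\Psi\rangle$ on the geodesic segment from $|a\rangle$ to $|b\rangle$ at the point where $|\langle a|\Psi\rangle|=\sqrt{P}$ (which is possible precisely because $P\in[c^2,1]$ forces $\arccos\sqrt{P}\le\arccos c$). An elementary trigonometric calculation then yields
\begin{equation*}
\arccos\sqrt{|\langle b|\Psi\rangle|^2}\,=\,\arccos c\,-\,\arccos\sqrt{P},
\end{equation*}
whence $|\langle b|\Psi\rangle|^2 = \bigl(c\sqrt{P}+\sqrt{(1-c^2)(1-P)}\bigr)^2 = h_c^{f_W}(P)$, i.e.\ Wootters' triangle inequality is saturated on this triple.

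With this triple in hand, the triangle inequality for any other admissible metric $d_f$ of the family, applied to the same $(|a\rangle,|\Psi\rangle,|b\rangle)$, gives $f(c^2)\le f(P)+f(h_c^{f_W}(P))$, and hence $h_c^{f_W}(P)\le h_c^f(P)$. The companion inequality in which $\A$ and $\B$ are interchanged follows by the symmetric argument. Together these establish that Wootters' function produces the tightest restriction on $(P_{\A;\rho},P_{\B;\rho})$ within the family covered by Theorem~\ref{prop:LPI_mixed_POVM}.

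The main obstacle I anticipate is the geometric bookkeeping for the saturating construction: one must verify that for every admissible $(c,P)$ with $c\in[1/\sqrt N,1]$ and $P\in[c^2,1]$ the three required states can actually be arranged on a single geodesic of $\mathbb{CP}^1$, and that the resulting value of $|\langle b|\Psi\rangle|^2$ lies in $[c^2,1]$, consistent with the codomain of $h_c^{f_W}$. Once this explicit check is in place, the whole proof collapses to one application of the triangle inequality for $d_f$ combined with monotonicity of $f$.
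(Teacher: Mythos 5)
Your proposal is correct and is essentially the paper's own argument: the paper also builds an explicit triple of pure states in a two-dimensional (real) subspace with pairwise overlaps $\cos^2\theta$, $\cos^2(\gamma-\theta)$ and $\cos^2\gamma$ (i.e.\ three states saturating the Wootters triangle inequality) and then invokes the triangle inequality for a general $d_f$ on that triple to conclude $h_c^{f}\ge h_c^{f_W}$. The only difference is presentational --- the paper runs the argument by contradiction while you run it directly --- and the ``trigonometric bookkeeping'' you worry about reduces to the one-line identity $\langle\phi|\psi_2\rangle=\cos\theta\cos\gamma+\sin\theta\sin\gamma=\cos(\gamma-\theta)$ for the real-coefficient states the paper writes down.
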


It is important  also to address the following  situation: when considering only
one observable, in the general POVM context, there exists a possible uncertainty
that is intrinsic to the POVM representation itself \cite{Mas07}.  Indeed a POVM
set  $\A$ can  be such  that whatever  the state  of the  system is,  no outcome
appears with certainty.   This situation arises when no operator  in $\A$ has an
eigenvalue  equal  to  unity.   An  inequality  quantifying  such  an  intrinsic
uncertainty     is      given     in     the      following     corollary     to
Theorem~\ref{prop:LPI_mixed_POVM},     whose    proof     is     presented    in
App.~\ref{subsec:proof_Corollary1}:
\begin{corollary}
  Let  $\A =  \{  A_i  \}_{i= 1,  \ldots,  N_A}$ be  a  POVM  set describing  an
  observable $A$, and acting on an $N$-dimensional Hilbert space $\H$.  Then for
  an  arbitrary density operator  $\rho \in  \D$ acting  on $\H$,  the following
  relation holds:
\begin{equation}
\U_f\left( \A ; \rho \right) \ge f \left( c^{\ 2}_\A \right) ,
\label{eq:Intrinsic_POVM}
\end{equation}
where
\begin{equation}
c_\A = \ \max_i \left\| \sqrt{A_i} \right\| \ \in \ \left[\frac{1}{\sqrt{N_A}},
1 \right]
\label{eq:overlapPOVM}
\end{equation}
is a  generalized intrinsic overlap  of the POVM  set.  The bound  is nontrivial
(only) when the eigenvalues of any operator $A_i$ are different from unity.
\label{prop:Intrinsic_POVM}
\end{corollary}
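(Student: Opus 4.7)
The plan is to deduce Corollary~\ref{prop:Intrinsic_POVM} from Theorem~\ref{prop:LPI_mixed_POVM} by applying it to $\A$ paired with a carefully chosen second POVM that contributes zero uncertainty. The natural candidate is the trivial one-element POVM $\B = \{I\}$ (with $N_B = 1$), which is a legitimate POVM since $I \succeq 0$ and satisfies the completeness relation by itself. For any state $\rho \in \D$, we then have $p_1(B;\rho) = \Tr(I \rho) = 1$, so $P_{\B;\rho} = 1$ and, because $f(1) = 0$ by assumption, $\U_f(\B;\rho) = 0$.

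Next I would compute the generalized overlap for this pair. Since $\sqrt{I} = I$, the definition~\eqref{eq:overlap} gives $c_{\A,\B} = \max_i \|\sqrt{A_i}\, I\| = \max_i \|\sqrt{A_i}\| = c_\A$. Plugging these two observations into Theorem~\ref{prop:LPI_mixed_POVM} immediately yields $\U_f(\A;\rho) + 0 \ge f(c_\A^{\,2})$, which is exactly~\eqref{eq:Intrinsic_POVM}.

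It remains to verify the range stated in~\eqref{eq:overlapPOVM} and the nontriviality comment. For the upper bound, the completeness relation $\sum_i A_i = I$ together with $A_k \succeq 0$ for every $k$ implies $A_i \preceq I$ for each $i$, hence $\|A_i\| \le 1$ and $\|\sqrt{A_i}\| = \sqrt{\|A_i\|} \le 1$. For the lower bound, taking the trace of the completeness relation gives $\sum_i \Tr A_i = N$, so by pigeonhole there exists $i^\star$ with $\Tr A_{i^\star} \ge N/N_A$; since the operator norm (the largest eigenvalue) dominates the average eigenvalue, $\|A_{i^\star}\| \ge \Tr(A_{i^\star})/N \ge 1/N_A$, and therefore $c_\A^{\,2} = \max_i \|A_i\| \ge 1/N_A$. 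Finally, since $f$ is continuous and strictly decreasing with $f(1)=0$, we have $f(c_\A^{\,2}) > 0$ precisely when $c_\A < 1$, i.e.\ when $\|A_i\|<1$ for every $i$, which is the statement that no $A_i$ admits $1$ as an eigenvalue.

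The argument is essentially a one-line specialization of Theorem~\ref{prop:LPI_mixed_POVM}, so there is no real obstacle in the derivation of~\eqref{eq:Intrinsic_POVM} itself; the only mildly delicate point is the lower bound $c_\A \ge 1/\sqrt{N_A}$, which requires combining the positivity of the $A_i$'s, the completeness relation, and the trace/eigenvalue inequality as above.
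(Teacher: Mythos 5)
Your proposal is correct and follows essentially the same route as the paper, which likewise obtains \eqref{eq:Intrinsic_POVM} immediately from Theorem~\ref{prop:LPI_mixed_POVM} by taking $\B = \{ I \}$ (noting $\U_f(\B;\rho)=0$ and $c_{\A,\B}=c_\A$). The only cosmetic difference is in the lower bound of \eqref{eq:overlapPOVM}: the paper sums $\langle \Psi | A_i | \Psi\rangle \le c_\A^{\,2}$ over $i$ using the resolution of the identity, while you use a pigeonhole argument on the traces; both are valid and equivalent in spirit.
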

Combining                  Theorem~\ref{prop:LPI_mixed_POVM}                 and
Corollary~\ref{prop:Intrinsic_POVM}, it appears that the lower bound for the sum
of  metric-based uncertainties  of  the form~\eqref{eq:Uf}  can  be improved  as
follows:
\begin{corollary}
  Let $\A = \{ A_i \}_{i = 1, \ldots,  N_A}$ and $\B = \{ B_j \}_{j = 1, \ldots,
    N_B}$ be two POVM sets  describing observables $A$ and $B$ respectively, and
  acting  on an  $N$-dimensional  Hilbert  space $\H$.   Then  for an  arbitrary
  density operator $\rho \in \D$ acting on $\H$, the following relation holds:
\begin{equation}
\U_f\left( \A ; \rho \right) + \U_f\left( \B ; \rho \right) \ge \max \left\{ f
\left( c_\A^{\ 2} \right) + f \left(c_\B^{\ 2} \right),  f \left( c^{\ 2}_{\A,\B}
\right) \right\}
\label{eq:LPI_mixed_POVM_improved}
\end{equation}
where $c_\A$,  $c_\B$ and  $c_{\A,\B}$ are the  intrinsic and  joint generalized
overlaps.
\label{prop:LPI_mixed_POVM_improved}
\end{corollary}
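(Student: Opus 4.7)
The plan is to deduce the improved bound by simply combining the two results already stated in the excerpt, namely Theorem~\ref{prop:LPI_mixed_POVM} and Corollary~\ref{prop:Intrinsic_POVM}, and observing that the sum of two valid lower bounds on the left-hand side can be replaced by their maximum.

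First I would invoke Theorem~\ref{prop:LPI_mixed_POVM} directly: for any density operator $\rho \in \D$ and any two POVM sets $\A$, $\B$, one has
\begin{equation*}
\U_f\left( \A ; \rho \right) + \U_f\left( \B ; \rho \right) \ge f\!\left( c_{\A,\B}^{\ 2} \right).
\end{equation*}
This supplies the second entry of the maximum in~\eqref{eq:LPI_mixed_POVM_improved}.

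Next I would apply Corollary~\ref{prop:Intrinsic_POVM} separately to each POVM set. Since that corollary is stated for an arbitrary POVM, it yields both $\U_f(\A;\rho) \ge f(c_\A^{\ 2})$ and $\U_f(\B;\rho) \ge f(c_\B^{\ 2})$ for the same state $\rho$. Adding these two inequalities termwise gives
\begin{equation*}
\U_f\left( \A ; \rho \right) + \U_f\left( \B ; \rho \right) \ge f\!\left( c_\A^{\ 2} \right) + f\!\left( c_\B^{\ 2} \right),
\end{equation*}
which is the first entry of the maximum.

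Finally, since the left-hand side dominates each of the two quantities on the right, it dominates their maximum, giving exactly~\eqref{eq:LPI_mixed_POVM_improved}. There is no real obstacle here: the corollary is a purely logical consolidation of the two preceding results, and all of the analytic work (the triangle inequality for $d_f$ leading to Theorem~\ref{prop:LPI_mixed_POVM}, and the single-POVM argument underlying Corollary~\ref{prop:Intrinsic_POVM}) has already been carried out upstream. The only thing worth flagging is that neither of the two bounds dominates the other in general, so the $\max$ is genuinely informative: depending on whether the joint overlap $c_{\A,\B}$ or the pair of intrinsic overlaps $(c_\A,c_\B)$ is tighter, one or the other term will be the active constraint.
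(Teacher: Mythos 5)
Your proof is correct and follows exactly the route the paper intends: the corollary is stated as a direct combination of Theorem~\ref{prop:LPI_mixed_POVM} (giving the $f(c_{\A,\B}^{\ 2})$ entry) with Corollary~\ref{prop:Intrinsic_POVM} applied to each POVM set separately (giving the summed intrinsic bound), and the paper offers no further argument beyond this consolidation. Nothing is missing.
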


We notice  that the overlap $c_{\A,\B}$ is  bounded from below and  above in the
following way (see App.~\ref{subsec:proof_bounds_c} for the proof):
\begin{equation}
\max \left\{\ \frac{c_{\A}}{\sqrt{N_B}} \, , \, \frac{c_{\B}}{\sqrt{N_A}}\
\right\} \: \le \: c_{\A,\B} \: \le \: c_\A\ c_\B
\label{eq:Domain_cab}
\end{equation}
Consequently, from  the last inequality and  since $c_\A c_\B  \leq \min \left\{
  c_\A , \ c_\B \right\}$, we  get $f(c_{\A,\B}^{\ 2}) \ge f \left( \min \left\{
    c_\A^{\ 2} \ , \ c_\B^{\ 2} \right\} \right) = \max \left\{ f(c_{\A}^{\ 2})\
  , \ f(c_\B^{\ 2}) \right\} $.  When at least one operator $A_i$ (and/or $B_j$)
of the POVM set $\A$ (and/or of  $\B$) has an eigenvalue equal to unity, $c_\A =
1$  (and/or $c_\B  = 1$),  thus we  get $f(c_{\A,\B}^{\  2}) \ge  f(c_\B^{\ 2})$
(and/or  $f(c_{\A,\B}^{\ 2})  \ge f(c_\A^{\  2})$):  together with  $f(1) =  0$,
$f(c_{\A,\B}^{\  2}) \ge  f(c_\A^{\ 2})  + f(c_\B^{\  2})$, i.e.,  the  bound in
Eq.~\eqref{eq:LPI_mixed_POVM_improved}       reduces       to      that       of
Eq.~\eqref{eq:LPI_mixed_POVM}.

\ 

It  turns out  that the  allowed domain  for the  pair of  maximal probabilities
$(P_{\A;\rho}  ,  P_{\B;\rho})$  is   constrained  as  given  in  the  following
corollary, the proof of which is given in App.~\ref{subsec:proof_Corollary3}:
\begin{corollary}
  Let $\A = \{ A_i \}_{i = 1, \ldots,  N_A}$ and $\B = \{ B_j \}_{j = 1, \ldots,
    N_B}$ be two POVM sets  describing observables $A$ and $B$ respectively, and
  acting  on an  $N$-dimensional  Hilbert  space $\H$.   Then  for an  arbitrary
  density  operator $\rho$  acting on  $\H$, the  pair of  maximal probabilities
  $(P_{\A;\rho} , P_{\B;\rho})$ is constrained to the domain
\begin{widetext}
\begin{equation}
\Dset_\lp(c_\A,c_\B,c_{\A,\B}) = \left\{ (P_\A,P_\B) \in \left[ \frac{1}{N_A} \;
c_\A^{\ 2} \right] \times \left[ \frac{1}{N_B} \; c_\B^{\ 2} \right]\ :\ P_\B
\le h_{c_{\A,\B}}\big(P_\A\big) \: \mbox{when} \: P_\A \ge c_{\A,\B}^{\ 2}
\right\}
\label{eq:DomainPaPb_Intrinsic_mixed_POVM}
\end{equation}
\end{widetext}
where $h_{c_{\A,\B}}$ is given by \eqref{eq:hcf} for $f(x)=\arccos\sqrt x$.
If $c_\B^{\ 2} \le h_{c_{\A,\B}}(c_\A^{\  2})$, i.e., $c_{\A,\B} \ge c_\A c_\B -
\sqrt{(1-c_\A)(1-c_\B)}$,  the allowed domain  becomes $\left[  \frac{1}{N_A} \;
  c_\A^{\ 2} \right] \times \left[ \frac{1}{N_B} \; c_\B^{\ 2} \right]$.
\label{prop:DomainPaPb_Intrinsic_mixed_POVM}
\end{corollary}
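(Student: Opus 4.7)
The plan is to combine Corollary~\ref{prop:Intrinsic_POVM} (intrinsic overlap), Theorem~\ref{prop:LPI_mixed_POVM} (the joint POVM LPI), and Theorem~\ref{prop:Wootters} (optimality of the Wootters function) and to carve out the allowed region as the intersection of the resulting constraints.

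First, I would apply Corollary~\ref{prop:Intrinsic_POVM} separately to each POVM. Since $\U_f(\A;\rho) = f(P_{\A;\rho}) \ge f(c_\A^{\ 2})$ and $f$ is strictly decreasing, one immediately gets $P_{\A;\rho} \le c_\A^{\ 2}$; together with the trivial lower bound $P_{\A;\rho} \ge 1/N_A$, this confines $P_{\A;\rho}$ to $\left[\frac{1}{N_A} \; c_\A^{\ 2}\right]$, and analogously $P_{\B;\rho} \in \left[\frac{1}{N_B} \; c_\B^{\ 2}\right]$. This yields the rectangular part of the domain.

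Next, I would invoke Theorem~\ref{prop:LPI_mixed_POVM} specialized to $f(x) = \arccos\sqrt{x}$; Theorem~\ref{prop:Wootters} guarantees that this choice gives the tightest constraint among the admissible $f$'s, so no other $f$ would shrink the domain further. Translating $\U_f(\A;\rho) + \U_f(\B;\rho) \ge f(c_{\A,\B}^{\ 2})$ exactly as in the discussion leading to~\eqref{eq:DomainPaPb_mixed_POVM}, one obtains $P_{\B;\rho} \le h_{c_{\A,\B}}(P_{\A;\rho})$ whenever $P_{\A;\rho} > c_{\A,\B}^{\ 2}$, with the constraint being vacuous otherwise (since then $f(P_{\A;\rho}) \ge f(c_{\A,\B}^{\ 2})$ already). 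Intersecting this curve-constraint with the rectangle produces the set $\Dset_\lp(c_\A,c_\B,c_{\A,\B})$ stated in the corollary.

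For the second assertion, the key observation is that $h_{c_{\A,\B}}$ is strictly decreasing on $[c_{\A,\B}^{\ 2} \; 1]$, because $f$ and hence $f^{-1}$ are both strictly decreasing while $h_c^f(x) = f^{-1}(f(c^2) - f(x))$. Consequently its minimum over the relevant subinterval $[c_{\A,\B}^{\ 2} \; c_\A^{\ 2}]$ is attained at $c_\A^{\ 2}$, and the hypothesis $c_\B^{\ 2} \le h_{c_{\A,\B}}(c_\A^{\ 2})$ then forces $h_{c_{\A,\B}}(P_{\A;\rho}) \ge c_\B^{\ 2} \ge P_{\B;\rho}$ for all admissible $P_{\A;\rho}$, so the curve constraint is automatically satisfied and only the rectangle survives. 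The equivalent explicit form $c_{\A,\B} \ge c_\A c_\B - \sqrt{(1-c_\A^{\ 2})(1-c_\B^{\ 2})}$ follows by applying the cosine-of-difference identity to $\cos(\arccos c_{\A,\B} - \arccos c_\A) \ge c_\B$, after noting that $h_{c_{\A,\B}}(c_\A^{\ 2}) = \cos^2(\arccos c_{\A,\B} - \arccos c_\A)$ for the Wootters $f$.

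I do not expect any single step to be a deep obstacle; the main care is bookkeeping. One must verify that invoking Theorem~\ref{prop:Wootters} is legitimate (the optimality established there is pointwise in $x$, hence transfers unchanged to the constraint on the pair), and one must treat boundary cases where $c_{\A,\B}^{\ 2}$ exceeds either $c_\A^{\ 2}$ or $c_\B^{\ 2}$ so that the curve constraint never bites and the allowed region degenerates directly to the rectangle, consistent with the bounds~\eqref{eq:Domain_cab}.
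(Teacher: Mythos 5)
Your argument is correct and is essentially the paper's own proof, which simply declares the corollary immediate from Theorems~\ref{prop:LPI_mixed_POVM} and~\ref{prop:Wootters}, Corollary~\ref{prop:LPI_mixed_POVM_improved}, and the explicit Wootters form of $h_c$; your version merely spells out the monotonicity of $h_{c_{\A,\B}}$ on $[c_{\A,\B}^{\ 2}\;1]$ and the angle bookkeeping. One worthwhile remark: your explicit condition $c_{\A,\B} \ge c_\A c_\B - \sqrt{(1-c_\A^{\ 2})(1-c_\B^{\ 2})}$ is the correct transcription of $c_\B^{\ 2} \le h_{c_{\A,\B}}(c_\A^{\ 2})$ (obtained via $\arccos c_{\A,\B} \le \arccos c_\A + \arccos c_\B$ and then taking cosines), whereas the form $\sqrt{(1-c_\A)(1-c_\B)}$ printed in the corollary appears to be a typo.
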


\

Let  us  now illustrate  both  theorems by  simulated  POVM  sets and  simulated
states. These simulations allow us  to comment on the uncertainty relations and,
in   particular,   on  Corollary~\ref{prop:DomainPaPb_Intrinsic_mixed_POVM}   in
various contexts.


\section{Numerical illustrations}
\label{sec:Illustrations}

This section  aims at illustrating  the constraints imposed on  the simultaneous
predictability   of   two   observables   as  expressed   by   the   uncertainty
relations~\eqref{eq:LPI_mixed_POVM}  in  Theorem~\ref{prop:LPI_mixed_POVM}.   To
this end,  we draw randomly several POVM  pairs; and for any  given pair $(\A_k,
\B_k)$  of POVM ($k  = 1,  \ldots$), we  draw randomly  mixed states  $\{ \rho_l
\}_{l=1,\ldots}$. Then  we calculate  the uncertainty sums  $\U_f(\A_k;\rho_l) +
\U_f(\B_k;\rho_l)$ and  the corresponding bounds $ f  \left( c^{\ 2}_{\A_k,\B_k}
\right)$  for   different  functions~$f$  (see   App.~\ref{app:simulations}  for
technical details on the simulation of POVM and states).  In order to illustrate
Corollary~\ref{prop:DomainPaPb_Intrinsic_mixed_POVM},   we   analyze  not   only
$\U_f(\A_k;\rho_l) + \U_f(\B_k;\rho_l)$, but also the cloud of points $\{ \left(
  P_{\A_k;\rho_l}  \,  ,   \,  P_{\B_k;\rho_l}  \right)  \}_{l=1,\ldots}$  where
$(\A_k,\B_k)$    is    fixed,     together    with    their    allowed    domain
$\Dset_\lp(c_{\A_k},c_{\B_k},c_{\A_k,\B_k})$.

Figures~\ref{fig:uncertainty_nondegenerate}.(a)--(c)  represent the simultaneous
uncertainty   $\U_f(\A_k;\rho_l)  +   \U_f(\B_k;\rho_l)$   versus  the   overlap
$c_{\A_k,\B_k}$, compared to the lower bound $f\left(c_{\A_k,\B_k}^{\ 2}\right)$
for: (a)  Wootters metric  given by $f(x)  = \arccos\sqrt{x}$, (b)  Bures metric
with  $f(x) =  \sqrt{2 \,  (1-\sqrt{x})}$, and  (c) root-infidelity  metric with
$f(x) = \sqrt{1-x}$,  in the context of observables  with nondegenerate spectra,
and  for both  pure and  mixed states.  Here, the  operators written  as  $A_i =
|a_i\rangle\langle a_i|,  i = 1, \ldots,  N$, are built from  the column vectors
$|a_i\rangle$ of a unitary matrix  (and similarly for the $B_j$).  The dimension
is    chosen     to    be    $N     =    3$.     These     figures    illustrate
Theorem~\ref{prop:LPI_mixed_POVM} and the fact  that, in the nondegenerate case,
the        bounds         that        we        find         are        optimal.
Figure~\ref{fig:uncertainty_nondegenerate}.(d)      depicts      the      domain
$\Dset_\lp(1,1,0.75)$  and   functions  $h_{c_{\A,\B}}^f$  for   the  Bures  and
root-infidelity metrics, together with  snapshots of pairs $\left( P_{\A;\rho_l}
  \,  ,  \, P_{\B;\rho_l}  \right)$:  this  clearly  illustrates that  the  case
corresponding to  Wootters metric gives  the most restrictive domain  within the
family  of  uncertainty  inequalities  (Theorem~\ref{prop:Wootters}).   It  also
suggests that, in  the nondegenerate context, $\Dset_\lp$ is  the best domain in
the sense  that it  coincides with  $\{ \left( P_{\A;\rho}  \, ,  \, P_{\B;\rho}
\right): \rho \in \D \}$.  This assertion remains however to be proved.

\begin{figure*}
\includegraphics[width=\textwidth]{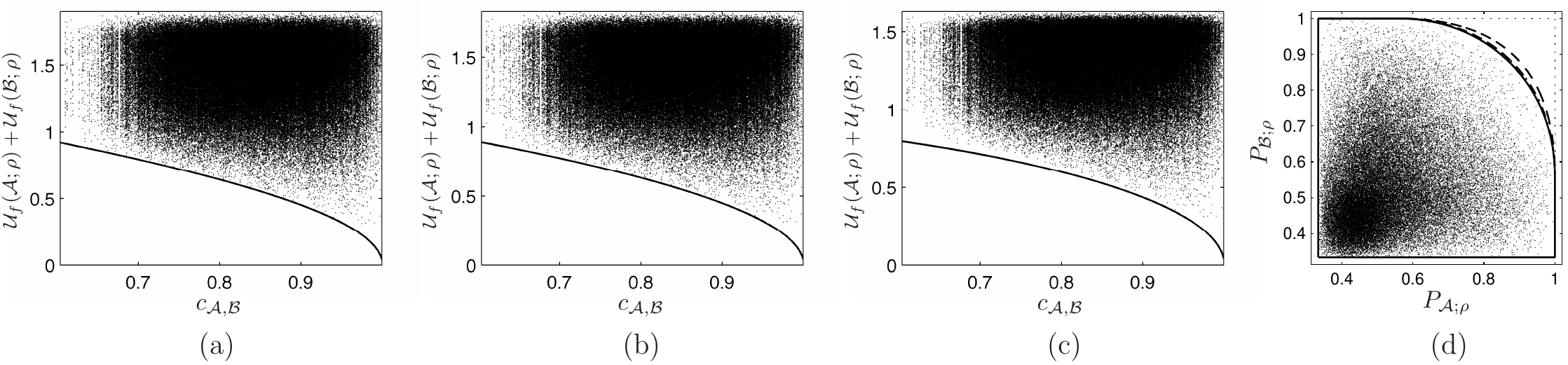}
\caption{Illustration     of    the     uncertainty    relations     given    in
  Theorem~\ref{prop:LPI_mixed_POVM} in the case of nondegenerate observables and
  $N   =  3$.    Snapshots   of  the   uncertainty   sum  $\U_f(\A_k;\rho_l)   +
  \U_f(\B_k;\rho_l)$  vs.~the corresponding generalized  overlap $c_{\A_k,\B_k}$
  (points), and comparison  to the bound $f(c_{\A,\B}^{\ 2})$  (solid line) for:
  (a)~$f(x) =  \arccos\sqrt{x}$ (Wootters); (b)~$f(x)  = \sqrt{2\ (1-\sqrt{x})}$
  (Bures); (c)~$f(x) = \sqrt{1-x}$ (root-infidelity). In the simulation, $k = 1,
  \ldots, 10^4$ and $l = 1, \ldots, 25$.  (d)~Domain $ \Dset_\lp(1,1,0.75)$ that
  corresponds to  Wootters metric (solid line),  and functions $h_{c_{\A,\B}}^f$
  for the Bures (dashed-dotted  line) and root-infidelity (dashed line) metrics;
  the points represent  the pairs $(P_{\A,\rho_l},P_{\B,\rho_l})$ with $(\A,\B)$
  fixed and $l = 1, \ldots, 5 \times 10^4$.}
\label{fig:uncertainty_nondegenerate}
\end{figure*}

Figure~\ref{fig:uncertainty_Dlp}    depicts    some    examples    of    domains
$\Dset_\lp(c_\A,c_\B,c_{\A,\B})$,
Eq.~(\ref{eq:DomainPaPb_Intrinsic_mixed_POVM}), together with snapshots of pairs
$(P_{\A;\rho_l},P_{\B;\rho_l})$, in various  contexts. The dimensions chosen are
$N  = 3$,  $N_A  = 4$  and  $N_B =  5$.  In  Figs.~\ref{fig:uncertainty_Dlp}.(a)
and~(b),   $c_{\A,\B}  <  c_\A   c_\B  -   \sqrt{(1-c_\A)  (1-c_\B)}$,   and  in
Figs.~\ref{fig:uncertainty_Dlp}.(c)   and~(d)   $c_{\A,\B}   >   c_\A   c_\B   -
\sqrt{(1-c_\A) (1-c_\B)}$,  illustrating situations where  $\Dset_\lp$ restricts
to $\left[  \frac{1}{N_A} \; c_\A^{\  2} \right] \times \left[  \frac{1}{N_B} \;
  c_\B^{\ 2}  \right]$ or  not.  In Figs.~\ref{fig:uncertainty_Dlp}.(a)  and (b)
[resp.\  Figs.~\ref{fig:uncertainty_Dlp}.(c) and (d)],  the overlaps  are equal,
but the pairs of  POVM are different.  It can be seen  that the domain where the
pairs $(P_{\A_k;\rho_l} \,  , \, P_{\B_k;\rho_l})$ live does  not depend only on
the overlaps, but also  depend on the pair of POVM itself.   To be more precise,
dealing with optimality, two notions have to be considered:
\begin{itemize}
\item $(\A,\B)-$optimal  domain $\Dset_\povm(\A,\B) = \left\{  (P_{\A;\rho} \, ,
    \, P_{\B;\rho}): \rho \in \D\right\}$  is the smallest domain containing all
  pairs $(P_{\A;\rho}  \, , \, P_{\B;\rho})$  for any mixed state  $\rho \in \D$
  acting on $\H$.
\item         $(c_\A,c_\B,c_{\A,\B})-$optimal         set         $\displaystyle
  \Dset_\c(c_\A,c_\B,c_{\A,\B})  = \mathop{\bigcup_{(\A,\B) \mbox{\tiny\  s.t. }
      (c_\A,c_\B,c_{\A,\B})}}   \Dset_\povm(\A,\B)$,  is   the  union   of  sets
  $\Dset_\povm(\A,\B)$    with    $(\A,\B)$    sharing    the    same    triplet
  $(c_\A,c_\B,c_{\A,\B})$ of overlaps.
\end{itemize}
These  two  domains  are  probably  not  equal.  Moreover,  at  this  step,  the
illustrations seem to indicate that  $\Dset_\lp \ne \Dset_\povm$, but it remains
to  be  proved  formally.   The  $(c_\A,c_\B,c_{\A,\B})-$optimality  or  not  of
$\Dset_\lp$  remains also  to be  investigated,  but the  illustrations seem  to
indicate that this last optimality is probable.

\begin{figure*}
\includegraphics[width=\textwidth]{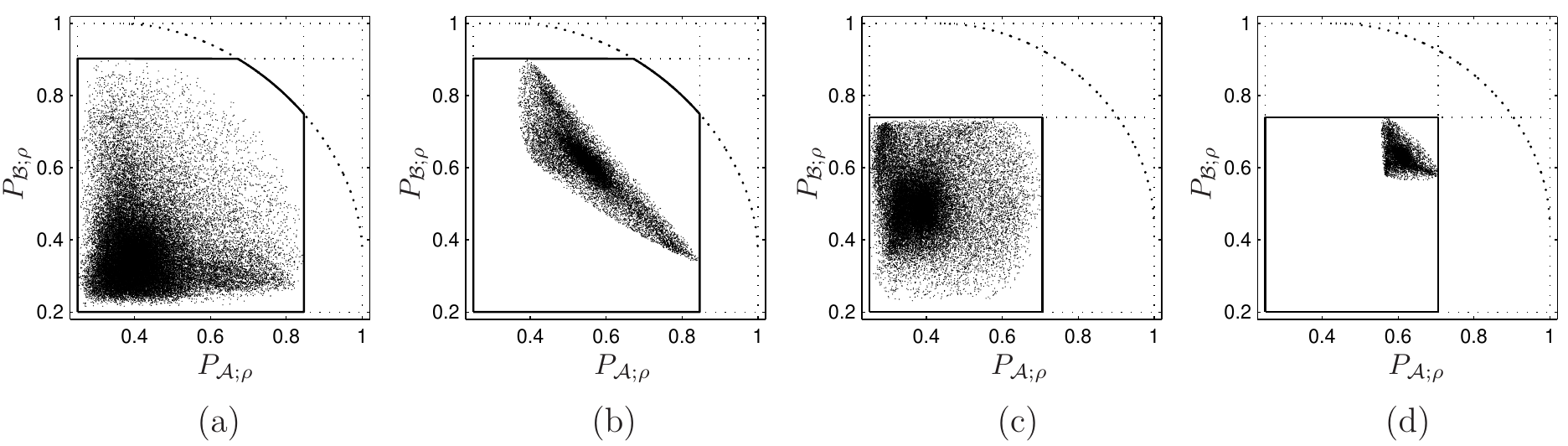}
\caption{Illustration  of  Corollary~\ref{prop:DomainPaPb_Intrinsic_mixed_POVM}:
  domain $\Dset_\lp(c_\A,c_\B,c_{\A,\B})$ and snapshots of pairs $(P_{\A;\rho_l}
  \, , \, P_{\B;\rho_l})$  (points).  In each figure, $N= 3$, $N_A  = 4$, $N_B =
  5$, and the dotted lines represent respectively $P_{\A;\rho} = \frac{1}{N_A}$,
  $P_{\B;\rho} = \frac{1}{N_B}$, $P_{\A;\rho} = c_\A^2$, $P_{\B;\rho} = c_\B^2$,
  and     $P_{\B;\rho}     =     h_{c_{\A,\B}}(P_{\A;\rho})$.     The     domain
  $\Dset_\lp(c_\A,c_\B,c_{\A,\B})$ is  delimited by the  solid line. In  (a) and
  (b): $(c_\A  \, , \, c_\B  \, , \, c_{\A,\B}  ) = (0.92 \,  , \, 0.95  \, , \,
  0.60)$; in (c) and (d):  $(c_\A \, , \, c_\B \, , \, c_{\A,\B}  ) = (0.84 \, ,
  \, 0.86 \, , \, 0.84)$.  The number of snapshots is of the order of $10^4$.}
\label{fig:uncertainty_Dlp}
\end{figure*}


\section{Concluding remarks}
\label{sec:Conclusions}

In  summary,  in this  work  we  have been  able  to  extend the  Landau--Pollak
inequality for degenerate observables described by the most general formalism of
positive  operator-valued   measures.   We  derived  a   family  of  uncertainty
relations,  given  in  Theorem~\ref{prop:LPI_mixed_POVM},  in the  most  general
context of observables described by POVM sets and for mixed quantum states.  The
relations   obtained  extend  and   generalize  the   well-known  Landau--Pollak
inequality in  the general context and  provide a whole  family of inequalities.
The starting point that gives rise  to this set of relations is the assimilation
of  measures of uncertainty  in terms  of a  conveniently defined  metric, which
satisfies  the triangle  inequality.  We  adopt metrics  that lie  on decreasing
functions of the square of the  inner product between pure states.  It comes out
that  Wootters  metric, leading  to  the  usual  Landau--Pollak inequality  (its
extension to mixed states and  POVM descriptions) is the most restrictive within
the family of inequalities  we obtain (Theorem~\ref{prop:Wootters}).  From these
theorems, we recover that in  general, in the POVM representation context, there
exists    an    uncertainty    intrinsic    to   the    representation    itself
(Corollary~\ref{prop:Intrinsic_POVM}), and  thus, that the  allowable domain for
the  pair   of  maximal  probabilities  corresponding  to   two  observables  is
constrained  by  both the  joint  uncertainty  relation  and the  intrinsic  one
(Corollary~\ref{prop:LPI_mixed_POVM_improved}).

A  direct consequence of  our results  is that  a previous  work \cite{ZozBos14}
dealing with  generalized entropies of probability vectors  extends very easily
in the most general case of POVM representations of observables.

Finally, the simulated  numerical results suggest that for a  given pair of POVM
$\A$ and $\B$, the allowable  domain for the pair $(P_{\A;\rho},P_{\B;\rho})$ is
tighter     that     domain     $\Dset_\lp(c_\A,c_\B,c_{\A,\B})$    given     by
Corollary~\ref{prop:DomainPaPb_Intrinsic_mixed_POVM}:  the questions  of finding
the tightest  domain for $(P_{\A;\rho},P_{\B;\rho})$,  given POVM sets  $\A$ and
$\B$     [$(\A,\B)-$    optimal     set    $\Dset_\povm(\A,\B)$]     or    given
$(c_\A,c_\B,c_{\A,\B})$        [$(c_\A,c_\B,c_{\A,\B})-$       optimal       set
$\Dset_\c(c_\A,c_\B,c_{\A,\B})$],  remain  open.  The  structure  of  the  tight
domains  (convex or  not?)  and the  properties  of the  states  (pure or  not?)
reaching the border of these domains are also open questions.  These points give
possible directions for further investigation in the field.

\acknowledgments

SZ  and MP  are very  grateful to  the R\'egion  Rh\^one-Alpes (France)  for the
grants that  enabled this work.  MP  and GMB also  acknowledge financial support
from CONICET (Argentina), and warm  hospitality during their stays at GIPSA-Lab.
PWL and  TMO are  grateful to SECyT-UNC  (Argentina) for financial  support.  We
warmly thank  Pr.\ Pierre Comon from  GIPSA-lab for the  useful discussion about
the simulation of sets of self-adjoint semidefinite matrices adding up to unity.



\appendix
\section{Proofs of the theorems and corollaries}
\label{app:Proofs}


\subsection{Landau--Pollak-type uncertainty relation  for sets of projectors and
  pure states}
\label{subsec:Projectors_pure}

The first  step in our demonstration  is to consider  pure states $|\Psi\rangle$
and sets of projectors $\P = \{ P_i \}_{i = 1, \ldots, N_P}$, being $P_i^{\,2} =
P_i$.   The case  of PVM  sets is  a particular  case, where  $P_i P_{i'}  = P_i
\delta_{ii'}$.   The  following  lemma  extends  the LPI  for  these  particular
measurements.

\begin{lemma}
  Let $\P = \{ P_i \}_{i = 1, \ldots,  N_P}$ and $\Q = \{ Q_j \}_{j = 1, \ldots,
    N_Q}$ be two  sets of projectors acting on  an $N$-dimensional Hilbert space
  $\H$.  Then for  an arbitrary pure state $|\Psi\rangle  \in \H$, the following
  relation holds:
\begin{equation}
\U_f\left( \P ; | \Psi \rangle \langle \Psi | \right) \, + \, \U_f\left( \Q ; |
\Psi \rangle \langle \Psi | \right) \: \geq \: f(c^{\ 2}_{\P,\Q})
\label{eq:LPI_pure_POVM_proj}
\end{equation}
where  $\displaystyle c_{\P,\Q}  =  \max_{ij} \left\|  \sqrt{P_i} \,  \sqrt{Q_j}
\right\|\ =\ \max_{ij} \left\| \sqrt{Q_j} \, \sqrt{P_i} \right\|$.
\label{prop:LPI_pure_POVM_proj}
\end{lemma}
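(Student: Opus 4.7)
The plan is to leverage the triangle inequality of the metric $d_f$ by constructing, for each of the two sets of projectors, a carefully chosen auxiliary pure state whose overlap with $|\Psi\rangle$ exactly reproduces the maximal probability.

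Concretely, let $i^{\ast}$ and $j^{\ast}$ be the indices that achieve $P_{\P;|\Psi\rangle\langle\Psi|}=\langle\Psi|P_{i^{\ast}}|\Psi\rangle$ and $P_{\Q;|\Psi\rangle\langle\Psi|}=\langle\Psi|Q_{j^{\ast}}|\Psi\rangle$. Assuming both are non-zero (the zero case is trivial since $f(0)$ is then the maximal value of $f$ on $[0,1]$), define
\begin{equation*}
|\Psi_\P\rangle \;=\; \frac{P_{i^{\ast}}|\Psi\rangle}{\|P_{i^{\ast}}|\Psi\rangle\|},
\qquad
|\Psi_\Q\rangle \;=\; \frac{Q_{j^{\ast}}|\Psi\rangle}{\|Q_{j^{\ast}}|\Psi\rangle\|}.
\end{equation*}
Using $P_{i^{\ast}}^{\,2}=P_{i^{\ast}}$ and self-adjointness, a direct computation gives $|\langle\Psi|\Psi_\P\rangle|^{2}=\langle\Psi|P_{i^{\ast}}|\Psi\rangle=P_{\P;|\Psi\rangle\langle\Psi|}$, so that $d_f\bigl(|\Psi\rangle,|\Psi_\P\rangle\bigr)=f\bigl(P_{\P;|\Psi\rangle\langle\Psi|}\bigr)=\U_f(\P;|\Psi\rangle\langle\Psi|)$, and similarly for $\Q$.

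Next, I would apply the triangle inequality to the three pure states $|\Psi\rangle,|\Psi_\P\rangle,|\Psi_\Q\rangle$:
\begin{equation*}
\U_f(\P;|\Psi\rangle\langle\Psi|)+\U_f(\Q;|\Psi\rangle\langle\Psi|)\;\ge\; d_f\bigl(|\Psi_\P\rangle,|\Psi_\Q\rangle\bigr)\;=\; f\bigl(|\langle\Psi_\P|\Psi_\Q\rangle|^{2}\bigr).
\end{equation*}
Since $f$ is strictly decreasing, it remains to show the key inequality $|\langle\Psi_\P|\Psi_\Q\rangle|^{2}\le c_{\P,\Q}^{\,2}$. For this I would write $\langle\Psi_\P|\Psi_\Q\rangle=\langle\Psi|P_{i^{\ast}}Q_{j^{\ast}}|\Psi\rangle/(\|P_{i^{\ast}}\Psi\|\,\|Q_{j^{\ast}}\Psi\|)$, factor the numerator as $\langle P_{i^{\ast}}\Psi| Q_{j^{\ast}}(Q_{j^{\ast}}\Psi)\rangle$ using $Q_{j^{\ast}}^{\,2}=Q_{j^{\ast}}$, apply Cauchy--Schwarz, and identify the resulting factor $\|Q_{j^{\ast}}P_{i^{\ast}}|\Psi\rangle\|/\|P_{i^{\ast}}|\Psi\rangle\|$ as being at most the operator norm $\|Q_{j^{\ast}}P_{i^{\ast}}\|$. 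Using that for a projector $\sqrt{P}=P$, this norm coincides with $\|\sqrt{Q_{j^{\ast}}}\sqrt{P_{i^{\ast}}}\|$, which is by definition at most $c_{\P,\Q}$.

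The main obstacle is precisely step four: manipulating $\langle\Psi|P_{i^{\ast}}Q_{j^{\ast}}|\Psi\rangle$ so that after normalization one recovers exactly an operator norm of a product of square roots of the projectors, rather than a weaker Cauchy--Schwarz bound such as $\sqrt{P_{\P}\,P_{\Q}}$ that would be insufficient. Once that identification is made, the two strict-monotonicity steps $f\bigl(|\langle\Psi_\P|\Psi_\Q\rangle|^{2}\bigr)\ge f(c_{\P,\Q}^{\,2})$ close the argument. This lemma will then serve as the building block for the POVM/mixed-state extension of Theorem~\ref{prop:LPI_mixed_POVM} via Naimark-type dilations handled in the subsequent appendices.
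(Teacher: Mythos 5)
Your proof is correct and follows essentially the same route as the paper's: the same normalized auxiliary states $P_{i^\ast}|\Psi\rangle/\|P_{i^\ast}|\Psi\rangle\|$ and $Q_{j^\ast}|\Psi\rangle/\|Q_{j^\ast}|\Psi\rangle\|$, the same triangle inequality for $d_f$ centered at $|\Psi\rangle$, and the same Cauchy--Schwarz plus operator-norm estimate (using $P^2=P=\sqrt{P}$) to bound $|\langle\Psi_\P|\Psi_\Q\rangle|$ by $c_{\P,\Q}$. The only differences are cosmetic --- you fix the maximizing indices at the outset and permute the order of the Cauchy--Schwarz and operator-norm steps --- and your explicit treatment of the degenerate case $\langle\Psi|P_{i^\ast}|\Psi\rangle=0$ is a small extra beyond what the paper states.
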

\begin{proof}
  Note first  that for any  operator $O$, the  operator norm satisfies  $\|O\| =
  \|O^\dag\|$ \cite{ReeSim80-Con90}.  This  property together with the Hermitian
  property  of  operators  $P_i$  and  $Q_j$  justifies  the  equality  $\left\|
    \sqrt{P_i}  \,  \sqrt{Q_j}  \right\|  =  \left\|  \sqrt{Q_j}  \,  \sqrt{P_i}
  \right\|$.  Consider now the two normalized pure states:
\begin{equation}
|\psi_i\rangle = \frac{P_i |\Psi\rangle}{\| P_i |\Psi\rangle \|} \quad
\mathrm{and} \quad |\varphi_j\rangle = \frac{Q_j |\Psi\rangle}{\| Q_j
|\Psi\rangle \|} 
\end{equation}
for $\| P_i |\Psi\rangle  \| \ne 0$ and $\| Q_j |\Psi\rangle  \| \ne 0$, then 
\begin{equation}
|\langle \Psi | \psi_i \rangle|^2 = \langle \Psi | P_i | \Psi \rangle
\quad \mathrm{and} \quad
|\langle \Psi|\varphi_j\rangle|^2 = \langle \Psi | Q_j | \Psi \rangle
\label{eq:projPsi}
\end{equation}
using that $P_i^{\,2} = P_i$ and $Q_j^{\,2} = Q_j$. 

Now,  the triangle  inequality  fulfilled by  the  metric $d_f$  applied to  the
triplet $| \psi_i \rangle, | \varphi_j \rangle$ and $| \Psi \rangle$ reads:
\begin{equation}
\label{eq:LPI_POVM_projectors_pure_state}
f\left(\langle \Psi | P_i | \Psi \rangle \right) \, + \, f \left( \langle \Psi |
Q_j | \Psi \rangle \right) \: \geq \: f \left( \frac{|\langle \Psi| P_i Q_j |
\Psi\rangle|^2}{\| P_i | \Psi \rangle\|^2 \| Q_j | \Psi \rangle\|^2} \right).
\end{equation}
Notice that
\begin{eqnarray}
\left| \langle \Psi| P_i Q_j | \Psi \rangle \right| & = &
\left| \langle \Psi |\ \sqrt{P_i}\ \sqrt{P_i}\ \sqrt{Q_j}\
\sqrt{Q_j}\ | \Psi \rangle \right|\nonumber\\[2mm]
& \leq & \left\| \sqrt{P_i}\ | \Psi \rangle \right\| \
\left\| \sqrt{P_i}\ \sqrt{Q_j}\ \sqrt{Q_j}\ | \Psi\rangle
\right\|\nonumber\\[2mm]
& \leq & \left \| \sqrt{P_i}\ | \Psi \rangle \right\|\ \left\| \sqrt{Q_j}\ |
\Psi \rangle \right\| \, \left\| \sqrt{P_i}\ \sqrt{Q_j} \right\|\nonumber\\[2mm]
& \leq & \left \| \sqrt{P_i}\ | \Psi \rangle \right\|\ \left\| \sqrt{Q_j}\ |
\Psi \rangle \right\| \, c_{\P,\Q}\label{eq:CS},
\end{eqnarray}
where  the first inequality  follows from  the Cauchy--Schwartz  inequality, the
second one from the definition of the  operator norm, and the third one from the
definition of  $c_{\P,\Q}$.  The proof ends  noting that $\sqrt{P_i}  = P_i$ and
$\sqrt{Q_j} = Q_j$, choosing $i'$ and $j'$ so that $\langle \Psi | P_{i'} | \Psi
\rangle =  \max_i \langle \Psi |  P_i | \Psi \rangle$  (necessarily nonzero) and
$\langle  \Psi |  Q_{j'} |  \Psi \rangle  =  \max_j \langle  \Psi |  Q_j |  \Psi
\rangle$,  together with the  decreasing property  of the  function $f$  and the
definition  of  $\U_f$ given  by  Eq.~\eqref{eq:Uf}.  Let  us mention  that  the
interchange of the roles  of $P_i$ and $Q_j$ leads to the  same result due to $|
\langle \psi_i | \varphi_j \rangle|^2  = |\langle \varphi_j | \psi_i \rangle|^2$
and the symmetry satisfied by the operator norm.
\end{proof}

Note that the  sets $\P$ and $\Q$ do  not need to satisfy the  resolution of the
identity,  i.e.,  the  inequality  applies  beyond the  scope  of  the  complete
description of observables by sets of projectors.


\subsection{Landau--Pollak-type uncertainty relation for POVM pairs and pure
  states}
\label{subsec:POVM_pure}

We can extend now the previous result to general  POVM sets.
\begin{lemma}
  Let $\A = \{ A_i \}_{i = 1, \ldots,  N_A}$ and $\B = \{ B_j \}_{j = 1, \ldots,
    N_B}$ be two  POVM sets describing observables $A$ and $B$  and acting on an
  $N$-dimensional  Hilbert  space  $\H$.   Then  for  an  arbitrary  pure  state
  $|\Psi\rangle \in \H$, the following relation holds:
\begin{equation}
\U_f\left( \A ; | \Psi \rangle \langle \Psi | \right) \, + \, \U_f\left( \B ; |
\Psi \rangle \langle \Psi | \right) \: \geq \: f(c^{\ 2}_{\A,\B})
\label{eq:LPI_pure_POVM}
\end{equation}
where  $\displaystyle c_{\A,\B}  =  \max_{ij} \left\|  \sqrt{A_i} \,  \sqrt{B_j}
\right\|\ =\ \max_{ij} \left\| \sqrt{B_j} \, \sqrt{A_i} \right\|$.
\label{prop:LPI_pure_POVM}
\end{lemma}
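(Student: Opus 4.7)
Plan: The plan is to reduce Lemma~\ref{prop:LPI_pure_POVM} to Lemma~\ref{prop:LPI_pure_POVM_proj} by constructing, in an enlarged Hilbert space, a triple of pure states whose pairwise squared overlaps encode the POVM measurement statistics exactly. I would work in $\tilde{\H} = \H \oplus \H_{\aux}^{\A} \oplus \H_{\aux}^{\B}$, where $\H_{\aux}^{\A}$ and $\H_{\aux}^{\B}$ are two independent auxiliary copies of $\H$. Using the square-root ``filters'' $X_i = \sqrt{A_i(I-A_i)}$ and $U_j = \sqrt{B_j(I-B_j)}$ (well defined because $0 \le A_i, B_j \le I$), the key objects are
\begin{equation*}
|\tilde\Psi\rangle = |\Psi\rangle \oplus 0 \oplus 0,
\end{equation*}
\begin{equation*}
|\tilde\psi_i\rangle = \frac{A_i|\Psi\rangle \oplus X_i|\Psi\rangle \oplus 0}{\sqrt{\langle\Psi|A_i|\Psi\rangle}}, \quad |\tilde\varphi_j\rangle = \frac{B_j|\Psi\rangle \oplus 0 \oplus U_j|\Psi\rangle}{\sqrt{\langle\Psi|B_j|\Psi\rangle}}.
\end{equation*}
The identity $A_i^2 + A_i(I-A_i) = A_i$ (and its $B_j$ analogue) makes $|\tilde\psi_i\rangle$ and $|\tilde\varphi_j\rangle$ unit vectors.

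Next I would read off the three pairwise squared overlaps. Because the two auxiliary sectors are mutually orthogonal, the cross term between $|\tilde\psi_i\rangle$ and $|\tilde\varphi_j\rangle$ receives contributions only from the original $\H$-component, so
\begin{equation*}
|\langle\tilde\Psi|\tilde\psi_i\rangle|^2 = \langle\Psi|A_i|\Psi\rangle, \quad |\langle\tilde\Psi|\tilde\varphi_j\rangle|^2 = \langle\Psi|B_j|\Psi\rangle,
\end{equation*}
\begin{equation*}
|\langle\tilde\psi_i|\tilde\varphi_j\rangle|^2 = \frac{|\langle\Psi|A_iB_j|\Psi\rangle|^2}{\langle\Psi|A_i|\Psi\rangle\,\langle\Psi|B_j|\Psi\rangle}.
\end{equation*}
Applying the triangle inequality for the metric $d_f$ to the triple $(|\tilde\psi_i\rangle, |\tilde\Psi\rangle, |\tilde\varphi_j\rangle)$ in $\tilde{\H}$ yields $f(\langle\Psi|A_i|\Psi\rangle) + f(\langle\Psi|B_j|\Psi\rangle) \geq f\bigl(|\langle\tilde\psi_i|\tilde\varphi_j\rangle|^2\bigr)$.

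To close the argument I would observe that Lemma~\ref{prop:LPI_pure_POVM_proj}'s Cauchy--Schwarz chain carries over verbatim with $\sqrt{A_i}, \sqrt{B_j}$ in place of $\sqrt{P_i}, \sqrt{Q_j}$ (the chain never actually used $P_i^2 = P_i$): factoring $A_iB_j = \sqrt{A_i}\sqrt{A_i}\sqrt{B_j}\sqrt{B_j}$ and applying Cauchy--Schwarz followed by the operator-norm definition gives $|\langle\Psi|A_iB_j|\Psi\rangle| \leq \sqrt{\langle\Psi|A_i|\Psi\rangle\,\langle\Psi|B_j|\Psi\rangle}\,c_{\A,\B}$. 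Hence $|\langle\tilde\psi_i|\tilde\varphi_j\rangle|^2 \leq c_{\A,\B}^{\ 2}$, and the monotonicity of $f$ upgrades the triangle inequality to $f(\langle\Psi|A_i|\Psi\rangle) + f(\langle\Psi|B_j|\Psi\rangle) \geq f(c_{\A,\B}^{\ 2})$; picking $i, j$ so that $\langle\Psi|A_i|\Psi\rangle = P_{\A;|\Psi\rangle\langle\Psi|}$ and $\langle\Psi|B_j|\Psi\rangle = P_{\B;|\Psi\rangle\langle\Psi|}$, together with $f$ decreasing, completes the proof. The main obstacle I anticipate is designing the enlargement correctly: a single shared ancilla would contaminate $\langle\tilde\psi_i|\tilde\varphi_j\rangle$ with an additional cross term $\langle\Psi|X_iU_j|\Psi\rangle$ which cannot be controlled by $c_{\A,\B}$, so two mutually orthogonal auxiliary copies of $\H$ are essential in order that the off-diagonal overlap collapses cleanly onto the single matrix element $\langle\Psi|A_iB_j|\Psi\rangle$ to which the Cauchy--Schwarz chain directly applies.
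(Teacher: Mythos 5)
Your proposal is correct and follows essentially the same route as the paper: a Naimark-type dilation into $\H \oplus \H^{\aux} \oplus \H^{\aux}$ with two mutually orthogonal ancilla sectors (your $|\tilde\psi_i\rangle$, $|\tilde\varphi_j\rangle$ are exactly the normalized images $P_i|\Phi\rangle/\|P_i|\Phi\rangle\|$, $Q_j|\Phi\rangle/\|Q_j|\Phi\rangle\|$ of the paper's block projectors), followed by the triangle inequality for $d_f$ and the same Cauchy--Schwarz chain bounding $|\langle\Psi|A_iB_j|\Psi\rangle|$ by $\|\sqrt{A_i}|\Psi\rangle\|\,\|\sqrt{B_j}|\Psi\rangle\|\,c_{\A,\B}$. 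Your closing remark about why a single shared ancilla would fail is precisely the reason for the paper's three-block structure.
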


\begin{proof}
  Let us  consider the  pure state  $| \Phi \rangle  = |  \Psi \rangle  \oplus 0
  \oplus 0$ belonging to the extended  Hilbert space which is the direct sum $\H
  \oplus  \H^\aux \oplus  \H^\aux$, where  $\H^\aux$ has  the same  dimension as
  $\H$.  Consider also projectors $P_i$ and $Q_j$ of the form~\cite{MiyIma07}
\begin{eqnarray*}
P_i &=& \begin{pmatrix}
A_i & \sqrt{A_i ( I - A_i)} & 0 \\[2.5mm]
\sqrt{A_i ( I - A_i)} & I - A_i & 0\\[2.5mm]
0 & 0 & 0
\end{pmatrix},\\[2.5mm]
Q_j &=& \begin{pmatrix}
B_j & 0 & \sqrt{B_j (I - B_j)} \\[2.5mm]
0 & 0 & 0  \\[2.5mm]
\sqrt{B_j (I - B_j)} & 0 & I - B_j
\end{pmatrix}.
\end{eqnarray*}
Using that
\begin{eqnarray*}
\langle \Phi | P_i | \Phi \rangle &=& \langle \Psi | A_i | \Psi \rangle ,\\
\langle \Phi | Q_j | \Phi \rangle &=& \langle \Psi | B_j | \Psi \rangle ,\\
\frac{|\langle \Phi | P_i Q_j | \Phi \rangle|}{\| P_i | \Phi \rangle\| \| Q_j |
\Phi \rangle\|} &=& \frac{|\langle \Psi| A_i B_j | \Psi \rangle|}{\| \sqrt{A_i}
| \Psi \rangle\| \| \sqrt{B_j} | \Psi \rangle\|} ,
\end{eqnarray*}
then inequality~\eqref{eq:LPI_POVM_projectors_pure_state}  applied to any triplet
$P_i, Q_j$ and $|\Phi\rangle$  so that $\langle \Phi | P_i |  \Phi \rangle \ne 0
\ne \langle \Phi | Q_j | \Phi \rangle$, leads to
\begin{eqnarray}
\label{eq:LPI_POVM_pure_state}
&& f \left( \langle \Psi | A_i | \Psi \rangle \right) \, + \, f \left( \langle
\Psi | B_j | \Psi \rangle \right) \: \nonumber \\
&& \qquad \qquad \geq \: f \left( \frac{|\langle \Psi| A_i
B_j | \Psi \rangle|^2}{\| \sqrt{A_i} | \Psi \rangle\|^2 \| \sqrt{B_j} | \Psi
\rangle\|^2} \right). 
\end{eqnarray}
Now,  as done in~\eqref{eq:CS},  we have  $\frac{|\langle \Psi|  A_i B_j  | \Psi
  \rangle|}{\| \sqrt{A_i} |  \Psi \rangle\| \| \sqrt{B_j} |  \Psi \rangle\|} \le
c_{\A,\B}$.    The   end    of   the    proof    is   similar    to   that    of
Lemma~\ref{prop:LPI_pure_POVM_proj}.
\end{proof}

Note  that, here  again, the  sets $\A$  and  $\B$ do  not need  to fulfill  the
resolution of  the identity.  Thus, Lemma~\ref{prop:LPI_pure_POVM}  applies in a
more general context than that of the description of observables by POVM.


\subsection{Landau--Pollak-type uncertainty relation for POVM sets and mixed
  states}
\label{subsec:POVM_mixed}

We   are   ready    now   to   prove   inequality~\eqref{eq:LPI_mixed_POVM}   in
Theorem~\ref{prop:LPI_mixed_POVM}.  To  this  end,  let us  consider  a  density
operator  $\rho$   acting  on  $\H$.    Since  it  is  Hermitian   and  positive
semidefinite, it can be diagonalized  on an orthonormal basis $\{|l\rangle\}$ of
$\H$, i.e.,  $ \rho =  \sum_{l = 1}^N  \rho_l \, | l  \rangle \langle l  |$ with
$\rho_l \ge  0$ and  $\sum_l \rho_l =  \Tr \rho  = 1$.  Let  us then  consider a
purification $ | \Phi' \rangle$ of  $\rho$, belonging to a product Hilbert space
$\H \otimes \widetilde{\H}^\aux$,
\begin{equation}
|\Phi' \rangle = \sum_{l=1}^N \sqrt{\rho_l} \ | l \rangle \otimes |
l^{\mathrm{aux}} \rangle,
\end{equation}
where $\{  | l^{\mathrm{aux}} \rangle \}$  is an arbitrary  orthonormal basis of
$\widetilde{\H}^\aux$    (without     loss    of    generality,     we    assume
$\widetilde{\H}^\aux$ of the  same dimension as $\H$).  The  mixed state on $\H$
is recovered by the partial trace, that is
\begin{equation}
\Tr_\aux \left( |\Phi' \rangle \langle \Phi' | \right) = \sum_{l=1}^N \rho_l
| l \rangle \langle l | = \rho.
\end{equation}
It can be verified that
\begin{equation*}
\langle \Phi' | A_i \otimes I| \Phi' \rangle = \Tr \left( A_i \rho \right),
\quad
\langle \Phi' | B_j \otimes I | \Phi' \rangle = \Tr \left( B_j \rho \right) ,
\end{equation*}
and that
\begin{eqnarray*}
\left \| \left( \sqrt{A_i \otimes I} \right) \, \left( \sqrt{B_j \otimes I}
\right) \right\| &=& \left\| \left( \sqrt{A_i} \otimes I \right) \, \left(
\sqrt{B_j} \otimes I \right) \right\|\\
= \left\| \left( \sqrt{A_i} \, \sqrt{B_j} \right) \otimes I \right\|
&=& \left\| \sqrt{A_i} \, \sqrt{B_j} \right\| .
\end{eqnarray*}
Applying  inequality~\eqref{eq:LPI_pure_POVM} to  the triplet  $A_i  \otimes I$,
$B_j      \otimes      I$      and      $|\Phi'     \rangle$,      leads      to
inequality~\eqref{eq:LPI_mixed_POVM},    that    concludes    the    proof    of
Theorem~\ref{prop:LPI_mixed_POVM}.


\subsection{Proof of Theorem~\ref{prop:Wootters}: Wootters metric gives
  the most restrictive domain for $(P_{\A;\rho},P_{\B;\rho})$}
\label{subsec:Wootters}

The  inner  product  defines the  cosine  of  an  angle  between two  states  of
$\H$. Thus, in the context of pure states, since $P_{\A;\rho}$ and $P_{\B;\rho}$
are closely linked to inner products,  it can be intuitively guessed that within
the family  of inequalities~\eqref{eq:LPI_mixed_POVM}, the  most restrictive one
occurs    when    $f    =    \arccos\sqrt{x}$.     Indeed,    in    this    case
inequality~\eqref{eq:LPI_mixed_POVM} links the angles between the possible pairs
among    three    vectors   of    $\H$.     In    the    general   context    of
Theorem~\ref{prop:LPI_mixed_POVM}, this guess turns out to be true.

\

First  of all, recall  that inequality~\eqref{eq:LPI_mixed_POVM}  is restrictive
only when  the pair $(P_{\A;\rho},P_{\B;\rho})$ belongs to  $(c_{\A,\B}^{\ 2} \;
1]^2$ under the form~\eqref{eq:DomainPaPb_mixed_POVM}:
\begin{equation}
P_{\B;\rho} \le h_{c_{\A,\B}}^f \big( P_{\A;\rho} \big) \qquad \mbox{for} \qquad
P_{\A;\rho} \in [c_{\A,\B}^{\ 2} \; 1] 
\nonumber
\end{equation}
where $h_c^f(x)  = f^{-1}  \big( f(c^2)  - f(x) \big)$.  Note now  that Wootters
metric, given by $f(x) = \arccos\sqrt{x}$, leads to $h_c(x) = \big( c \sqrt{x} +
\sqrt{1-c^2}  \sqrt{1-x}   \big)^2$  \  (the  superscript   $\arccos$  has  been
suppressed for the sake of simplicity).  Thus, denoting
\begin{equation*}
\gamma = \arccos c \qquad \mbox{and}  \qquad x = \cos^2 \theta \quad \mbox{with}
\quad \theta \in [0 \; \gamma],
\end{equation*}
we can write
\begin{equation*}
h_c(\cos^2 \theta) = \cos^2(\gamma-\theta) .
\end{equation*}

Fix now a decreasing function $f$ and assume that there is a $\theta_f \in [0 \;
\gamma]$ such that
\begin{equation*}
h_c^f(\cos^2 \theta_f) < h_c(\cos^2 \theta_f) = \cos^2(\gamma-\theta_f) . 
\end{equation*}
From  the definition~\eqref{eq:hcf} of  $h_c^f$ and  the decreasing  property of
$f$, this inequality becomes
\begin{equation}
f \big( \cos^2 \theta_f \big) + f \big( \cos^2(\gamma-\theta_f) \big) < f(\cos^2
\gamma)
\label{eq:f_cos}
\end{equation}
Let  us   then  consider  two   orthogonal  pure  states   $|\psi_1\rangle$  and
$|\Psi\rangle$ of $\H$ and let us define the pure states:
\begin{eqnarray*}
|\phi\rangle & = & \cos \theta_f \, |\psi_1\rangle + \sin \theta_f \,
|\Psi\rangle,\\
|\psi_2\rangle & = & \cos \gamma \, |\psi_1\rangle + \sin \gamma \, |\Psi\rangle .
\end{eqnarray*}
It can be  verified that $|\langle \phi | \psi_1  \rangle|^2 = \cos^2 \theta_f$,
$|\langle  \phi | \psi_2  \rangle|^2 =  \cos^2 (\gamma-\theta_f)$  and $|\langle
\psi_2    |     \psi_1    \rangle|^2    =    \cos^2     \gamma$,    such    that
inequality~\eqref{eq:f_cos} writes
\begin{equation*}
f \big( |\langle \phi | \psi_1 \rangle|^2\big) + f \big( |\langle \phi | \psi_2
\rangle|^2\big) < f \big( |\langle \psi_2 | \psi_1 \rangle|^2\big)
\end{equation*}
For such  a function $f$, $d_f  \big( |\Phi\rangle,|\Psi\rangle \big)  = f \big(
|\langle\Phi  |\Psi\rangle|^2 \big)$ is  {\em not}  a metric  since it  does not
satisfy the triangle inequality.  In conclusion, for any function $f$ defining a
metric  $d_f$ between pure  states, $h_c^f(x)  \ge h_c(x)  $ in  $[c^2 \;  1 ]$,
proving  that Wootters  metric  gives  the most  restrictive  inequality of  the
family~\eqref{eq:LPI_mixed_POVM}, as stated in Theorem~\ref{prop:Wootters}.


\subsection{Proof of  Corollary~\ref{prop:Intrinsic_POVM}: Intrinsic uncertainty
  relation for POVM representations}
\label{subsec:proof_Corollary1}

The     proof     of     \eqref{eq:Intrinsic_POVM}     is     immediate     from
Theorem~\ref{prop:LPI_mixed_POVM}, taking $\B = \I \equiv \{ I \}$.  It can also
be proved directly by noting that $p_i (A ; |\Psi\rangle \langle\Psi|) = \langle
\Psi | A_i | \Psi \rangle \le  \| |\Psi\rangle \|\ \| A_i |\Psi\rangle \| \le \|
|\Psi\rangle  \|^2\  \|  A_i  \|  =   \|  A_i  \|$  for  any  (normalized)  pure
state.  Writing a mixed  state as  a convex  combination of  pure-states density
matrices, we get again $p_i (A ; \rho) \le \| A_i \| = \| \sqrt{A_i} \|^2$.  The
proof ends by  choosing the index $i'$ that maximizes $p_i  (A ; \rho)$ together
with the decreasing property of $f$.

Regarding the  interval for the intrinsic  overlap in~\eqref{eq:overlapPOVM}, by
definition  $c_\A^{\  2} \,  \ge  \, \left\|  \sqrt{A_i}  \right\|^2  \, \ge  \,
\left\langle  \Psi  \left| A_i  \right|  \Psi  \right\rangle$  for any  $i$  and
normalized state $|\Psi\rangle$.   Summing over $i$, from the  resolution of the
identity and  since $|\Psi\rangle$ is normalized,  it yields $N_A  \, c_\A^{\ 2}
\ge 1$.  Moreover, $\left\langle \Psi  \left| A_i \right| \Psi \right\rangle \le
1$ implies that $\left\| \sqrt{A_i} \right\|  \le 1$ for all $i$, and thus $c_\A
\le 1$.



\subsection{Proof    of    the    bounds    for    the    overlap    $c_{\A,\B}$
  in~\eqref{eq:Domain_cab}}
\label{subsec:proof_bounds_c}

The first inequality in~\eqref{eq:Domain_cab} comes from $c_{\A,\B}^{\ 2} \, \ge
\, \left\| \sqrt{A_i}\ \sqrt{B_j} \right\|^2  \, \ge \, \left\langle \Psi \left|
    \sqrt{A_i}\ B_j\  \sqrt{A_i} \right| \Psi  \right\rangle$ for any  $i,j$ and
unitary $|\Psi\rangle$.  Summing over $j$,  from the resolution of the identity,
and taking  the maximum  over $|\Psi\rangle$  and then over  $i$, leads  to $N_B
c_{\A,\B}^{\ 2} \ge c_\A^{\ 2}$.  By inverting $A_i$ and $B_j$ in the expression
of $c_{\A,\B}$ the overlap also  satisfies $N_A c_{\A,\B}^{\ 2} \ge c_\B^{\ 2}$.
The  second  inequality  is  a  consequence of  the  submultiplicative  property
$\left\| \sqrt{A_i}\ \sqrt{B_j}  \right\| \le \left\| \sqrt{A_i} \right\|\left\|
  \sqrt{B_j} \right\| \le c_\A\ c_\B$ (see Ref.~\cite{ReeSim80-Con90}).



\subsection{Proof    of    Corollary~\ref{prop:DomainPaPb_Intrinsic_mixed_POVM}:
  Allowed  domain  for  the   pair  of  maximal  probabilities  $(P_{\A;\rho}  ,
  P_{\B;\rho})$}
\label{subsec:proof_Corollary3}

The     proof    is     immediate     from    Theorems~\ref{prop:LPI_mixed_POVM}
and~\ref{prop:Wootters},  Corollary~\ref{prop:LPI_mixed_POVM_improved},  and the
definition $h_c(x) = \cos^2\left(\arccos c - \arccos\sqrt{x} \right) = \left(
  c \sqrt{x} + \sqrt{1-c^2} \sqrt{1-x} \right)^2$.\\
By symmetry, the  roles of $\A$ and $\B$ can  be interchanged, leading naturally
to the same domain.



\section{Simulation of states and of POVM}
\label{app:simulations}

Here, we present the algorithms used in Sec.~\ref{sec:Illustrations} to simulate
quantum states (pure or mixed) and POVM.

Pure   states   can   be    simulated   as   $|\Psi\rangle   =   \Phi(\vartheta)
\frac{|\varphi\rangle}{\| |\varphi\rangle  \|}$ where $\frac{|\varphi\rangle}{\|
  |\varphi\rangle \|}$ has  a uniform distribution on the  unit sphere $\Sset^N$
by drawing $|\varphi\rangle$ according to a zero-mean Gaussian law with identity
covariance matrix~\cite{Knu98}; $\Phi(\vartheta)$ is a diagonal matrix of phases
$\mbox{e}^{\imath \vartheta_i}$ where the $\vartheta_i$ ($i = 1, \ldots, N$) are
mutually independent and  uniformly distributed on $\left[ 0  \; 2 \pi \right)$,
and independent of $|\varphi\rangle$.

In  order to  simulate mixed  states, we  can use  the fact  that  an Hermitian,
positive  semidefinite operator  can be  diagonalized on  an  orthonormal basis,
$\rho = \sum_{m=1}^N \alpha_m |\Psi_m\rangle \langle\Psi_m|$ where $\alpha_m \ge
0$ are  the eigenvalues  of $\rho$, with  $\sum_m \alpha_m  = 1$ because  of the
normalization  of  $\rho$~\cite{ZycSom03}.  Thus,  we  can simulate  orthonormal
bases $\left\{ |\Psi_m\rangle  \right\}_{m = 1, \ldots, N}$ as  the columns of a
randomly drawn unitary matrix~\cite{RandomU}; the coefficients $\alpha_m$ can be
drawn independently according to a uniform  law on $[0 \; 1]$, and normalized to
add to  unity.  Another way of making  should be to generate  a complex Gaussian
random  matrix $M$ and  to compute  $\rho =  \frac{M M^\dag}{\Tr(M  M^\dag)}$ as
proposed for instance in Ref.~\cite{ZycSom03}, or  from a pure state in a higher
dimensional space and taking the partial trace (see App.~\ref{subsec:POVM_mixed}
and Refs.~\cite{ZycSom03, ZycSom05}).

As far  as we  know, there are  no ways  to simulate POVM  sets with  a specific
distribution.  For $\A$ (and similarly  for $\B$), a simple approach may consist
in  drawing a  unitary matrix  $U$~\cite{RandomU} and  a set  of  $N_A$ diagonal
matrices $D_i$ of positive elements, and  to consider the set of matrices $A_i =
U  \left(  \sum_{j=1}^{N_A} D_j  \right)^{-1}  D_i  U^\dag$  that satisfies  the
resolution  of identity.   In  this case  the  $A_i$ give  a  resolution to  the
identity,  but they  share  the same  eigenspace.   To avoid  this drawback,  we
simulate sets of  $N_A$ self-adjoint matrices in the following  way, that can be
viewed as an extension of the previous approach~\cite{Com14}. Let
\begin{equation*}
A_i = R_i\ U_i\ \Delta_i\ U_i^\dag\ R_i^\dag \quad \mbox{for} \quad i = 1,
\ldots, N_A-1
\end{equation*}
and
\begin{equation*}
A_{N_A} = R_{N_A-1}\ U_{N_A-1}\ (I-\Delta_{N_A-1})\ U_{N_A-1}^\dag\ R_{N_A-1}^\dag
\end{equation*}
where
\begin{itemize}
\item  $U_i$ are  unitary matrices  independently  drawn according  to the  Haar
  (uniform) distribution on the set of unitary matrices~\cite{RandomU},
\item $\Delta_i$  are diagonal matrices, where the  components are independently
  drawn according to a uniform distribution on $[0 \; 1]$,
\item    $R_1   =    I$,    and   $R_i    =   R_{i-1}\    U_{i-1}\
  \sqrt{I-\Delta_{i-1}}\quad $ for $i = 2, \ldots, N_A-1$.
\end{itemize}
It can be verified recursively that the $A_i$ form a resolution of the identity,
evaluating the sum  $\left( (A_{N_A} + A_{N_A-1}) +  A_{N_A-2}\right) + \ldots +
A_1$ step by step. Moreover, the $A_i$ do not share the same eigenspace.



\end{document}